\newcommand{\cF}{\ensuremath{\mathcal{F}}}
\newcommand{\cN}{\ensuremath{\mathcal{N}}}
\newcommand{\cS}{\ensuremath{\mathcal{S}}}
\newcommand{\cG}{\ensuremath{\mathcal{G}}}
\newcommand{\R}{\ensuremath{\mathbb{R}}}
\newcommand{\N}{\ensuremath{\mathbb{N}}}
\newcommand{\A}{\ensuremath{\mathbb{A}}}
\newcommand{\deconv}{\oslash}
\newcommand{\ba}{\boldsymbol{\alpha}}
\newcommand{\SD}{{\mathrm{SD}}}
\newcommand{\TD}{{\mathrm{TD}}}
\newcommand{\TDA}{{\mathrm{AG}}}
\newcommand{\TDC}{{\mathrm{2S}}}
\newcommand{\bb}{{\mathbf{b}}}
\newcommand{\sint}[1]{\mathbb{N}_{#1}}
\newcommand{\fl}{\mathrm{Fl}}
\newcommand{\path}{\rightsquigarrow}
\newcommand{\bu}{\bullet}
\DeclareMathOperator{\start}{start}
\newtheorem{theorem}{Theorem}
\newtheorem{lemma}{Lemma}
\newtheorem{definition}{Definition}
\newtheorem{example}{Example}
\newtheorem{corollary}{Corollary}
\begin{document}

\title{Stability and  performance guarantees in networks with cyclic dependencies}

\author{Anne Bouillard\\
Nokia Bell Labs France\\
Email: Anne.Bouillard@nokia-bell-labs.com
}

\maketitle

\begin{abstract}

  With the development of real-time networks such as reactive embedded
  systems, there is a need to compute deterministic performance
  bounds. This paper focuses on the performance guarantees and
  stability conditions in networks with cyclic dependencies in the
  network calculus framework. We first propose an algorithm that
  computes tight backlog bounds in tree networks for any set of flows
  crossing a server. Then, we show how this algorithm can be applied to improve bounds from the literature fir any topology, including cyclic networks. In particular, we show that the ring is stable in the network calculus framework. 
\end{abstract}

\section{Introduction} 
With the development of critical embedded systems, it becomes a
necessity to compute worst-case performance guarantees. Network
calculus is a (min,plus)-based theory that computes global performance bounds from a local
description of the network. These performances are the maximum
backlog at a server of end-to-end delay of a flow. Examples of
applications are switched network~\cite{Cruz1995},
Video-on-Demand~\cite{MR2006}... More recently, it has been very
useful for analysis large embedded networks such as AFDX (Avionics
Full Duplex)~\cite{BNOT2010}.

In most applications, such as AFDX, only feed-forward topologies are
used. One reason is the difficulty of deriving good deterministic
performance bounds in networks with cyclic dependencies. However,
allowing cycles in networks would result in a better bandwidth usage and
more flexible communications \cite{AMFLRU16}. As a consequence, there is
a strong need to design efficient methods for computing precise deterministic bounds. 

Recent works (\cite{BJT2010a,BN15,BS2016}) have focused on
computing tight performance bounds in feed-forward networks, but the
stability of a network is still an open problem in network calculus.

The most classical method for computing performance guarantees in
cyclic networks is to use the {\em fix-point} or {\em stopped-time}
method. It has first been presented in~\cite{Cruz1991b}. 
A sufficient condition for stability is obtained as the existence of a fix point in an equation derived from the network description. 

The theoretical aspects of the stability in deterministic queueing
networks have also been studied in the slightly different model named
{\em adversarial queueing network} (see \cite{BKRSW2001} for a precise
presentation). An injection rate per server (the rate at which data
crossing this server is sent into the network) is given instead of one
arrival rate per flow. Stability is stated in function of the topology as a minor-exclusion conditions in~\cite{ABS2004}, in function of a service
policy in~\cite{AAFLLK2001} or of the injection rate in~\cite{LPR2004}.

Fewer works concern the stability in Network calculus. The most
classical result is the stability of the ring, which is proved
in~\cite{TG1996} for work-conserving links and generalized
in~\cite{LT2001}. Instability results are provided
in~\cite{And2007, And2000}. In~\cite{And2007}, the authors even show
that the FIFO policy can be unstable at arbitrary small utilization
rates.  Some works, such as ~\cite{RL2008}, have focused on finding sufficient condition for
the stability in FIFO networks. 
 
Another direction of research has consisted in breaking the cyclic
dependencies in order to ensure the stability. 
Removing arcs could disconnect the network, but forbidding some paths
of length 2, as in the {\em turn-prohibition} method,
\cite{SKZ2002,PSKL2004}, can ensure both stability and connectivity.

\paragraph{Contributions}
In this paper, we study the problem of stability in networks with
cyclic dependencies in the network calculus formalism, using recent
the developments in~\cite{BN15} for tandem networks.  
Our main contributions are the following:

\begin{itemize}
\item we generalize the recent algorithm of~\cite{BN15}, that computes exact worst-case delays in a tandem network. We adapt it to compute the worst-case backlog of a server for any subset of flows crossing that server in tree networks. As a matter of fact, the algorithm in~\cite{BN15} can be deduced from this new algorithm, while the reverse is not true. As a by-product, we improve the results of~\cite{BS15} about sink-tree networks;
\item this new algorithm is used to compute new stability sufficient conditions
  in networks as mathematical programs. In particular, we demonstrate the stability of the ring in the network calculus framework. A weaker result had already been proved in ~\cite{LT2001} and~\cite{TG1996}, but our weaker assumption close a long-standing conjecture. 
\end{itemize}

  \bigskip

  The rest of the paper is organized as follows: in
  Section~\ref{sec:framework}, we recall the network calculus
  basics. Then in Section~\ref{sec:stopped}, we present generic
  mathematical programming methods to compute sufficient condition for
  the stability of networks.  Next in Section~\ref{sec:exact}, we give
  our algorithm that computes exact worst-case backlog in tree
  networks. We finally combine these results to compute new stability
  sufficient conditions in Section~\ref{sec:stability}, and we compare
  them through numerical experiments in Section~\ref{sec:numerical}.

\section{Network calculus framework and model}
\label{sec:framework}
We denote by $\N$ the set of non-negative integers $\{0,1,\ldots\}$
and for all $n\in\N$, we set $\sint{n}=\{1,\ldots,n\}$. For $x\in\R$, we
set $(x)_+ = \max(x,0)$.
We write~$\R_+$ for the set of non-negative reals.

While our model is in line with the standard definitions of networks calculus,
we present only the material that is needed in this paper. 
A more complete presentation of the
network calculus framework can be found in the reference
books~\cite{LT2001,Chang2000}. 

\subsection{Flow and server model}

\paragraph{Data flows}
Flows of data are represented by non-decreasing
and left-continuous functions that model the cumulative
processes. More precisely, if $A$ represents a flow at a certain point
in the network, $A(t)$ is the amount of data of this flow that crossed
this point in the time interval $[0,t)$, with the convention $A(0)=0$.  More
formally, let \begin{multline*} \cF =
\{f:\R_+ \rightarrow \R~|~f(0)=0, f\text{~non-decreasing }
\text{and left-continuous} \}.
\end{multline*}

A system $\cS$ is a non-deterministic relation between input
and output flows, where the number of inputs is the same as the
number of outputs: $\cS \subseteq \cF^m\times \cF^m$ and there is a
one-to-one relation between the inputs and the outputs of the system,
such that to each input flow corresponds one and only one output flow
that is causal -- no data is created or lost inside the system -- meaning
that for $((A_i)_{i=1}^m,(B_i)_{i=1}^m) \in \cS$, $\forall i
\in \sint{m}$, $A_i\geq B_i$. The vector
$((A_i)_{i=1}^m,(B_i)_{i=1}^m)$ is an {\em (admissible) trajectory} of
$\cS$ if $((A_i)_{i=1}^m,(B_i)_{i=1}^m) \in \cS$.
If $m=1$, i.e., the system has exactly one incoming and one outgoing flow, then we will also refer to it 
as a {\em server}.

\paragraph{Arrival curves} The notion of arrival curve is quite
simple: The amount of data that arrived during an interval of time is
a function of the length of this interval. More formally, let
$\alpha\in\cF$. A flow is constrained by the arrival curve $\alpha$,
or is $\alpha$-constrained if $$\forall s,t\in\R_+ \text{ with }s\leq t\colon
\quad A(t) - A(s) \leq \alpha(t-s).$$ 

A typical example of such arrival curve are the pseudo-linear 
{\em token-bucket}
functions: $\alpha_{b,r}:0\mapsto 0;~t\mapsto b + r
t$, if $t>0$. 
The burstiness parameter~$b$ can be interpreted as the maximal amount
of data that can arrive simultaneously and the rate~$r$ as a maximal
long-term arrival rate.
\paragraph{Service curves}
The role of a service curve is to constrain the relation between the
input of a server and its output.  Let $A$ be an cumulative arrival
process to a server and $B$ be its cumulative departure process. Several
types of service curves have been defined in the literature, and the
main types are the simple and strict service curves, which
we now define.

We say that $\beta$ is a {\em simple service curve} for a server $\cS$ if
$$\forall (A,B)\in\cS\colon \quad A\geq B\geq A* \beta,$$ with the convolution
$A * \beta(t) =\inf_{0\leq s\leq t}A(s) + \beta(t-s)$. 

An interval $I$ is a {\em backlogged period} for $(A,B)\in\cF\times\cF$ if
$\forall u\in I$, $A(u) > B(u)$. The start of the
backlogged period of an instant $t\in\R_+$ is $\start(t) = \sup\{u\leq t \mid A(u) =
B(u)\}$. As both $A$ and $B$ are left-continuous, we have $A(\start(t))
=B(\start(t))$, and $(\start(t),t]$ is always a backlogged period.

We say that~$\beta$ is a {\em strict service curve} for server~$\cS$ if
\begin{equation}
\label{eq:strict}
\begin{split}
& \forall (A,B) \in \cS\colon\ A\geq B \quad \text{ and } \\ & \forall \mathrm{~backlogged~periods~} (s,t]\colon\ 
B(t)-B(s) \geq \beta(t-s).
\end{split}
\end{equation}
We define $\cS(\beta)$ as the set of functions $(A,B)$ satisfying Equation~\eqref{eq:strict}.
The name strict service curves implies a difference to simple service
curves.  Works on the comparisons between the different types of
service curves can be found in~\cite{LT2001, BJT2010b}. In this
article, we will mainly deal with strict service curves, but will make use of
the convolution $A*\beta$ used to define simple service curves. 

A typical example of a service curve are  the {\em rate-latency}
functions: $\beta_{R,T}:t\mapsto R\cdot(t-T)_+$, where $T$ is the
latency until the server has to become active and $R$ is its minimal
service rate after this latency. 

Note that a server $\cS$ may not be deterministic, as the function
$\beta$ only corresponds to a guarantee on the service offered.  Among
 this non-determinism, we will focus on two modes of operation:

 \begin{itemize}
 \item {\bf Exact service mode:} During a backlogged period $(s,t]$, the service is {\em exact}
  if for all $u\in(s,t]$ we have $B(u) = A(\start(t)) + \beta(u-\start(t))$.
  (In this case, $\start(u) = \start(t)$.)
 \item {\bf Infinite service mode:} During an
  interval of time $(s,t]$, the service is {\em infinite} if
  $\forall u\in(s,t]\colon B(u) = A(u)$, i.e.,  the server serves all data
  instantaneously.
 \end{itemize}

For a system $\cS\in\cF^m\times \cF^m$ with $m$ inputs and outputs, we
say that $\cS$ offers a strict service curve $\beta$ if the
aggregate system is, that is, if $$\forall \left((A_i)_{i\in\sint{m}},(B_i)_{i\in\sint{m}}\right)\in\cS,~
\left(\sum_{i\in\sint{m}} A_i,\sum_{i\in\sint{m}} B_i\right)\in\cS(\beta).$$

We assume no knowledge about the service policy in this system (except
that it is FIFO per flow). 

\subsection{Performance guarantees}

\paragraph{Backlog} In this article, we focus on the worst-case backlog of a flow or a set
of flow at a given server of a network. Let $(A,B) \in\cF^2$ be an
admissible trajectory of a server.  The backlog of the server at time
$t$ is $b(t) = A(t) - B(t)\}$. The worst-case backlog is then
$b_{\max}=\sup_{t\geq 0} b(t)$. Graphically this is the maximal
vertical distance between $A$ and $B$.

We denote  $\ell(t) = t-\start(t)$, the length of the backlogged period upt to $t$ and $\ell_{\max} = \max_t \ell(t)$, the maximum length of a backlogged period. 

We denote $b_{\max}(\alpha,\beta)$
(resp. $\ell_{\max}(\alpha,\beta)$) the maximum backlog (resp. the
maximum length of a backlogged period) that can be obtained for a flow
that is $\alpha$-constrained crossing $\cS(\beta)$. For example, we have
\begin{itemize}
	\item $b_{\max}(\gamma_{b,r} ,\beta_{R,T}) =  b+rT $ if $r\leq R$, $=+\infty$ otherwise;
	\item $\ell_{\max}(\gamma_{b,r} ,\beta_{R,T}) = \frac{b + RT}{R-r})$
	if $r< R$, $=+\infty$ otherwise.
\end{itemize}

For a system with $m$ inputs and outputs, it is also possible to
compute the maximal backlog for a set of flows crossing this server. If
$I\subset \N_m$, the backlog of flows in $I$ at time $t$ is $b_I(t) =
\sum_{i\in I}F^{(in)}_i(t) - \sum_{i\in I}F^{(out)}_i(t)$. If the
system offers a strict service curve $\beta$ and flow $i$ is
$\alpha_i$-constrained, then $$b_I(t) \leq b_{I,\max} = (\sum_{i\in I}\alpha_i)
\deconv (\beta - \sum_{j\notin I} \alpha_j)_+(0),$$ where $f\deconv g(t)
= \sup_{u\geq 0} f(t+u)-g(u)$ is the (min,plus)-deconvolution. 
In the case of leaky-bucket arrival curves and rate latency service curve, 
\begin{equation}
\label{eq:BI}b_{I,\max} = b_I + \frac{r_I}{R-r_{\overline{I}}}(b_{\overline{I}} + r_{\overline{I}} T) + r_{{I}} T,
\end{equation}
with $r_I = \sum_{i\in I}r_i$, and similarly for $r_{\overline{I}}$, $b_I$ and $b_{\overline{I}}$.

\paragraph{Stability} We will also be interested in the stability of a network. Let us first
define the stability for server:

\begin{definition}[Server stability]
	Consider a server offering a strict service curve $\beta$ and a flow
	crossing it, with arrival curve $\alpha$.
	\begin{itemize}
		\item This server is said {\em unstable} if its worst-case backlog
		is unbounded;
		\item This server is said {\em critical} if its worst-case backlog
		is bounded, but the lengths of its backlogged periods are not
		bounded;%
		\item This server is said {\em stable} if the length of its backlogged periods is  bounded.
	\end{itemize}
\end{definition}

If the service curve is rate-latency $\beta_{R,T}$ and the
arrival curve token-bucket $\gamma_{b,r}$, then a server is
unstable if $R<r$, critical is $R=r$ and stable if
$R>r$.

Note that this definition only involves $r$ and $R$. The stability is
insensitive to $b$ and $T$, that only influence the size of the
backlog and backlogged period. 

\subsection{Network model}
Consider a network composed of $n$ servers numbered from 1 to $n$ and
crossed by $m$ flows named $f_1,\ldots,f_m$, such that
\begin{itemize}
	\item each server $j$ guarantees a rate-latency strict service curve
	$\beta^{(j)} = \beta_{R_j,T_j}$;
	\item each flow crosses the network along a path $\pi= \langle
	\pi_i(1),\ldots,\pi_i(\ell_i)\rangle$, where $\ell_i\geq 1$ is the
	length of the path. Each flow is constrained by the arrival curve
	$\alpha_i = \gamma_{b_i,r_i}$.
\end{itemize}

For a server $j$, we define $\fl(j) = \{i~|~\exists \ell,~\pi(\ell) = j\}$ the set of flows crossing server $j$.

We denote by $\cN$ this network. Its induced graph is the directed graph whose vertices are the servers and the set of arcs is 
$$\A = \{(\pi_i(k),\pi_i(k+1))~|~i\in\sint{m}, k\in\sint{\ell_i-1}\}.$$

We assume, without loss of generality, as we will focus on the
performances in one server, that the network is connected has a unique
final strictly connected component. Moreover, we assume that for all
$j\in\sint{n-1}$, there exists $j'>j$ such that $(j,j')\in \A$ (up to
renumbering the servers, this is also without loss of generality).

\paragraph{Classes of networks:}
\begin{itemize}
	\item if $\A = \{(j,j+1)~|~j\in\sint{n-1}\}$, then the network is called an {\em tandem networks};
	\item if the output degree of each vertex except node $n$ is 1, then the network is called a {\em tree network};
	\item if the graph network has no cycle, the network is called {\em feed-forward};
	\item otherwise, it has {\em cyclic dependencies}.
\end{itemize}

\paragraph{Network stability}
\begin{definition}[Local stability]
	Consider a network $\cN$. It is said locally stable if all its servers are stable using the initial arrival curves:  $\forall j\in\sint{n}$, 
	$$ \ell_{\max}(\sum_{i\ni j} \alpha_i , \beta^{(j)})<\infty.$$
\end{definition}

\begin{definition}[Global stability]
	A network  is {\em globally stable} if for all its servers, the length of the maximal backlogged period is bounded.
\end{definition}

We call the {\em linear model} when arrival curves are leaky-bucket and the service curve rate-latency. 
 Our aim is to give sufficient properties for the
global stability for networks with cyclic dependencies (the underlying
graph has cycles). 

\begin{lemma}
	If a network is globally stable, then it is locally stable.
\end{lemma}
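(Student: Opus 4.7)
The approach is to show the contrapositive: if $\cN$ fails to be locally stable at some server $j$, then it fails to be globally stable. Since we work in the linear model, the explicit formula for $\ell_{\max}(\gamma_{b,r},\beta_{R,T})$ recalled in the excerpt gives that local instability at $j$ amounts to the rate condition $\rho_j := \sum_{i\ni j} r_i \geq R_j$, and that in this regime the standalone server $j$ already admits an arrival of the form $A_i(t)=b_i+r_i t$ which, together with exact service, produces an unbounded backlogged period.

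The plan is to embed that standalone worst case into the whole network. I would exhibit a single admissible trajectory of $\cN$ with the following ingredients: every source sends at maximum rate $A_i(t)=b_i+r_i t$ (which is $\alpha_i$-constrained); server $j$ operates in exact service mode; every other server operates in infinite service mode, i.e.\ $B=A$ pointwise through it. A quick check shows that infinite service mode is admissible for any strict service curve $\beta^{(k)}$: the pair $(A,A)$ satisfies causality $A\geq A$ trivially and has no backlogged period, so the strict-service inequality $B(t)-B(s)\geq \beta^{(k)}(t-s)$ restricted to backlogged periods is vacuous.

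With this choice, the output of every upstream server equals its input, so induction along the finite path of each flow $i\ni j$ shows that the arrival of $i$ at $j$ equals the source arrival $A_i$. Hence the aggregated input at $j$ is $\sum_{i\ni j} A_i$, and the pair (input, output) at $j$ is identical to the standalone worst case, whose backlogged period is unbounded. Consequently $\cN$ is not globally stable, which is the contrapositive of the lemma.

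The only nontrivial step I see is the admissibility check for infinite service mode at every server $k\neq j$; beyond it, everything is a faithful transcription of the isolated counterexample for server $j$. I expect no obstacle from the cyclic topology, because infinite service everywhere else means no data is ever stored upstream or downstream of $j$, so for the purposes of analysing the backlog at $j$ the network collapses to that single server.
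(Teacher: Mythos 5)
Your proposal is correct and follows essentially the same argument as the paper: by contraposition, let all servers other than the locally unstable server $j$ operate as infinite servers, so that the arrivals at $j$ coincide with the source processes and the standalone unbounded backlogged period of $j$ is realized as an admissible trajectory of the whole network. Your added check that infinite service mode is admissible for any strict service curve is a fair explicitation of a step the paper leaves implicit, but it does not change the route.
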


\begin{proof}
	We prove this by contra-position: suppose that the network is not
	locally stable. Then, there exists a server $j$ that is not stable
	considering the original arrival process. Consider the following
	trajectory: every server acts as an infinite server except server
	$j$. Then the arrival processes into server $j$ are exactly those
	that are injected into the network, and server $j$ is not stable:
	the length of its backlogged period cannot be bounded. But they
	cannot either in this behavior of the network. So the network is not
	globally stable.
\end{proof}

\renewcommand{\r}{\mathrm{r}}

\section{Sufficient conditions of network stability}
\label{sec:stopped}
When the network is feed-forward, it is possible to compute the
performance of the network by 
\begin{enumerate}
	\item applying Equation~\eqref{eq:BI} at every server
	and for $I=\{i\}$ for each flow $f_i$ crossing that server and
	\item propagating the constraints in the topological order of the servers.
\end{enumerate}

This is not possible when there are cyclic dependencies,
because of the inter-dependencies between the backlogs computed
when applying Equation~\eqref{eq:BI}. 

The fix-point method is a generic method to compute performance
guarantees in networks with cyclic dependencies. The main idea is to
compute, for each server and each flow crossing it, an output arrival
curve that depends on the input arrival curves at that server. A
system of equations is then obtained and the solution, if it exists,
gives output arrival curves for each flow after each
server it crosses. This approach has been described for leaky bucket
arrival curves and rate-latency service curves in ~\cite{LT2001}. The
method for proving this approach is the {\em stopped-time} method.

In this section, we generalize the stopped-time method described in~\cite{LT2001} in four directions:

\begin{itemize}
\item it can be applied to any arrival and service curve;
\item it can be applied to any feed-forward decomposition of the network
  rather than a for each server individually;
\item it can be applied to any group of flows rather than considering each flow individually;
\item it can handle different combinations of the two cases above.
\end{itemize}

Instead of computing a fix-point, we rather present our results as the solution of an optimization problem, that enables us to include the computation of the network performance (worst-case backlog for example) in problem. In the linear model, we obtain a linear program. 

With the three first points of generalization, it is also possible to solve a fix-point problem in order to obtain stability sufficient conditions. But the fourth point has no natural formalization into a fix-point problem.

We first describe the transformation of the network and flow grouping before and
 computing performance bounds as an optimization problem.

\subsection{Network transformation}
\label{ssec:fft}

\subsubsection{Feed-forward decomposition}
Let $\cN$ be a network and $G_{\cN}$ be its induced graph. This graph can be transformed into an acyclic graph by
removing a set of arc $\A^{\mathrm{r}}\subseteq \A$.
\begin{example}
\label{ex:toy}The toy network of Fig.~\ref{fig:toy} can be transformed into an acyclic network by removing arc
  $(4,2)$. In the next section, we will see that [erformance bounds can be efficiently computed in tree topologies. Such a decomposition can be obtained with $\A^\r = \{(4,2),(2,1)\}$. With $\A^\r=\A$, all arcs are removed, and we obtain a graph with isolated nodes only. 
 \begin{figure}[htbp]
    \centering
    \begin{picture}(0,0)%
\includegraphics{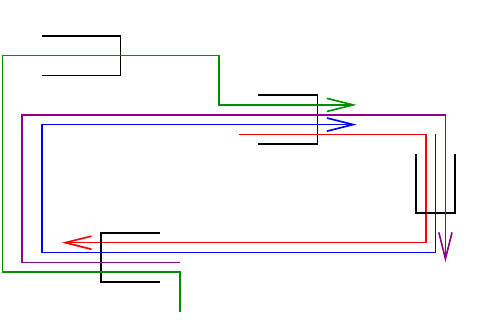}%
\end{picture}%
\setlength{\unitlength}{4144sp}%
\begingroup\makeatletter\ifx\SetFigFont\undefined%
\gdef\SetFigFont#1#2#3#4#5{%
  \reset@font\fontsize{#1}{#2pt}%
  \fontfamily{#3}\fontseries{#4}\fontshape{#5}%
  \selectfont}%
\fi\endgroup%
\begin{picture}(2218,1461)(4849,-4072)
\put(5176,-2716){\makebox(0,0)[lb]{\smash{{\SetFigFont{8}{9.6}{\rmdefault}{\mddefault}{\updefault}{\color[rgb]{0,0,0}1}%
}}}}
\put(6121,-2986){\makebox(0,0)[lb]{\smash{{\SetFigFont{8}{9.6}{\rmdefault}{\mddefault}{\updefault}{\color[rgb]{0,0,0}3}%
}}}}
\put(5401,-4021){\makebox(0,0)[lb]{\smash{{\SetFigFont{8}{9.6}{\rmdefault}{\mddefault}{\updefault}{\color[rgb]{0,0,0}2}%
}}}}
\put(6976,-3481){\makebox(0,0)[lb]{\smash{{\SetFigFont{8}{9.6}{\rmdefault}{\mddefault}{\updefault}{\color[rgb]{0,0,0}4}%
}}}}
\put(5716,-4021){\makebox(0,0)[lb]{\smash{{\SetFigFont{8}{9.6}{\rmdefault}{\mddefault}{\updefault}{\color[rgb]{0,.56,0}$f_3$}%
}}}}
\put(5716,-3886){\makebox(0,0)[lb]{\smash{{\SetFigFont{8}{9.6}{\rmdefault}{\mddefault}{\updefault}{\color[rgb]{.56,0,.56}$f_4$}%
}}}}
\put(6931,-3256){\makebox(0,0)[lb]{\smash{{\SetFigFont{8}{9.6}{\rmdefault}{\mddefault}{\updefault}{\color[rgb]{0,0,1}$f_2$}%
}}}}
\put(5806,-3301){\makebox(0,0)[lb]{\smash{{\SetFigFont{8}{9.6}{\rmdefault}{\mddefault}{\updefault}{\color[rgb]{1,0,0}$f_1$}%
}}}}
\end{picture}%
    \caption{Toy network of Example~\ref{ex:toy}.}
    \label{fig:toy}
  \end{figure}
\end{example}

Note that this transformation is not unique, and finding the minimum
set of edges to remove is a NP-complete problem (it is the {\em
  Minimum feed-back arc set} problem in~\cite{GJ1979}).  The most
classical solution in the literature is to remove every arc -- each
server is analyzed in isolation -- but it might be a better choice to
remove fewer arcs, and obtain a tree for example, we will see later that these topologies can be easily analyzed.

We now modify the flows in accordance to the arcs that have been
removed: each flow $f_i$ is split into several flows $f_{i,1},
f_{i,2},..., f_{i,m_i}$ of respective paths in
$(\sint{n},\A-\A^\r)$, $\pi_{i,1} = \langle
\pi_i(1),...,\pi_i(k_1)\rangle$, $\pi_{i,2}=\langle
\pi_i(k_1+1),...,\pi_i(k_2)\rangle$,..., $\pi_{i,m_i}=\langle
\pi_i(k_{m_i}+1),...,\pi_i(\ell_i)\rangle$, where
$(\pi_i(k_j+1),\pi_i(k_j))\in\A^r$.  We denote $\ell_{i,k}$ the
length of $\pi_{i,k}$, and call $\cN^{\mathrm{FF}}$ the feed-forward network that is obtained.

\begin{example}
  \label{ex:toy2}
  In the toy example of Fig.~\ref{fig:toy}, if $\A^\r = \{(4,2),(2,1)\}$, 
 flow $f_1$ is split into $f_{1,1}$ and $f_{1,2}$ with
  respective paths $\langle 3,4\rangle$ and $\langle 2\rangle$,
  flow $f_2$ is split into $f_{2,1}$ and $f_{2,2}$, with respective paths
  $\langle 4\rangle$ and $\langle 2,3\rangle$ and flow $f_3$ is split into $f_{3,1}$ and $f_{3,2}$, with respective paths
  $\langle 2\rangle$ and $\langle 1,3\rangle$. Flow $f_4$ remains unchanged ($f_{4,1} = f_4$). The result of this decomposition is depicted Fig.~\ref{fig:toy-unfold}.
\begin{figure}[htbp]
  \centering
  \begin{picture}(0,0)%
\includegraphics{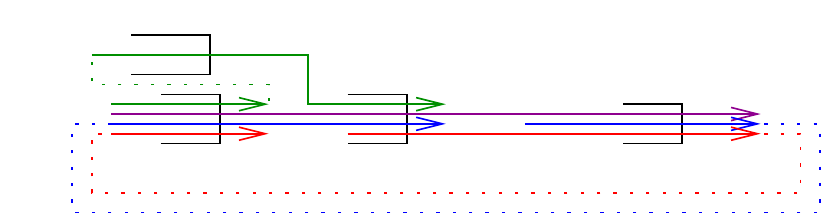}%
\end{picture}%
\setlength{\unitlength}{4144sp}%
\begingroup\makeatletter\ifx\SetFigFont\undefined%
\gdef\SetFigFont#1#2#3#4#5{%
  \reset@font\fontsize{#1}{#2pt}%
  \fontfamily{#3}\fontseries{#4}\fontshape{#5}%
  \selectfont}%
\fi\endgroup%
\begin{picture}(3762,972)(4441,-3583)
\put(5176,-2716){\makebox(0,0)[lb]{\smash{{\SetFigFont{8}{9.6}{\rmdefault}{\mddefault}{\updefault}{\color[rgb]{0,0,0}1}%
}}}}
\put(6121,-2986){\makebox(0,0)[lb]{\smash{{\SetFigFont{8}{9.6}{\rmdefault}{\mddefault}{\updefault}{\color[rgb]{0,0,0}3}%
}}}}
\put(7381,-3346){\makebox(0,0)[lb]{\smash{{\SetFigFont{8}{9.6}{\rmdefault}{\mddefault}{\updefault}{\color[rgb]{0,0,0}4}%
}}}}
\put(5266,-3391){\makebox(0,0)[lb]{\smash{{\SetFigFont{8}{9.6}{\rmdefault}{\mddefault}{\updefault}{\color[rgb]{0,0,0}2}%
}}}}
\put(4591,-3121){\makebox(0,0)[lb]{\smash{{\SetFigFont{8}{9.6}{\rmdefault}{\mddefault}{\updefault}{\color[rgb]{0,.56,0}$f_3$}%
}}}}
\put(6706,-3346){\makebox(0,0)[lb]{\smash{{\SetFigFont{8}{9.6}{\rmdefault}{\mddefault}{\updefault}{\color[rgb]{0,0,1}$f_2$}%
}}}}
\put(4591,-2896){\makebox(0,0)[lb]{\smash{{\SetFigFont{8}{9.6}{\rmdefault}{\mddefault}{\updefault}{\color[rgb]{0,.56,0}$f_{3,2}$}%
}}}}
\put(4456,-3436){\makebox(0,0)[lb]{\smash{{\SetFigFont{8}{9.6}{\rmdefault}{\mddefault}{\updefault}{\color[rgb]{1,0,0}$f_{1,2}$}%
}}}}
\put(4456,-3301){\makebox(0,0)[lb]{\smash{{\SetFigFont{8}{9.6}{\rmdefault}{\mddefault}{\updefault}{\color[rgb]{0,0,1}$f_{2,2}$}%
}}}}
\put(4456,-3166){\makebox(0,0)[lb]{\smash{{\SetFigFont{8}{9.6}{\rmdefault}{\mddefault}{\updefault}{\color[rgb]{.56,0,.56}$f_4$}%
}}}}
\put(5851,-3301){\makebox(0,0)[lb]{\smash{{\SetFigFont{8}{9.6}{\rmdefault}{\mddefault}{\updefault}{\color[rgb]{1,0,0}$f_1$}%
}}}}
\end{picture}%
    \caption{Feed-forward transformation of the toy network.}
    \label{fig:toy-unfold}
  \end{figure}
\end{example}

\subsubsection{Flow grouping}The second step is to group flows. Instead of dealing with flows
$f_{i,k}$ individually, we use a partition. Let
$S=\{(i,k)~|~i\in\sint{m}$,~$k\in\sint{m_i}\}$ be the set of flows
that have been created with the feed-forward transformation and
$S_1,\ldots,S_L \subseteq S$ with
$S_1\sqcup S_2 \sqcup \cdots \sqcup S_L = S$ (the symbol $\sqcup$ denotes the disjoint union).

Our goal is to compute $(\alpha^\ell)_{\ell\in\N_L}$ where
$\alpha^\ell$ is an arrival curves of the aggregation of the flows $(f_s)_{s\in S_\ell}$ in
$S_\ell$. We denote $f^{\ell}$ the aggregated flow.

This formulation might at first seem quite strange, but it takes several
interesting cases into account. 

\begin{example}
\label{ex:group}
  Consider again the toy example with the decomposition of
  Fig.~\ref{fig:toy-unfold}. There is a priori no use to group flows
  $f_{i,1}$, as their arrival curve is already known. For the other
  flows, there are two natural groupings. The first is to group
  individually: for all $\ell$, $S_\ell$ is a singleton. The second one is to group
  according to the removed arcs, and have the grouping
  $\{(1,2),(2,2)\}$ and $\{(3,2)\}$. Indeed, the arrival curve of
  $f_{1,2}+f_{2,2}$ can be less than the sum of the arrival curves of
  each individual flow, so hopefully, better performances can be
  computed with this decomposition.
\end{example}

\subsection{Stability and performances as an optimization problem}
If $\cN$ is stable, then there exists an arrival curve
$\alpha^\ell$ for the aggregated flow $f^\ell$. 

We make the following assumptions:

\begin{itemize}
\item[($\mathbf{A}_1$)] For all $\ell \in\N_L$, there exists a non-decreasing function $H_\ell:\cF^L \to \cF$ that computes an arrival curve in $\cN^{\mathrm{FF}}$ for the aggregation of flows $(f_{i,k-1})_{(i,k)\in S_\ell}$ at the end of their respective path $\pi_{i,k-1}$  in function of $\alpha^1,\ldots,\alpha^L$, the respective arrival curves of all aggregated flows $f^\ell$.
\item[($\mathbf{A}_2$)] There exists a non-decreasing function
  $G: \cF^L \to \R_+$ that computes a performance $P$ in $\cN$ as a  function of $(\alpha^\ell)_{\ell=1}^{L}$, arrival curves for the aggregate flows $f^\ell$.
\item[($\mathbf{A}_3$)] $(H_\ell)_{\ell=1}^{L}$ and $G$ implicitly depend on the
  arrival curves $(\alpha_i)_{i=1}^m$ and on the service curves
  $(\beta_j)_{j=1}^n$. These functions are assumed to be
  non-decreasing with $\alpha_i$, $i\in\N_m$ and
  non-increasing with $\beta_j$, $j\in\N_n$.
 \end{itemize}
Assumptions ($\mathbf{A}_1$) and ($\mathbf{A}_2$) are ensured when $\cN^{\mathrm{FF}}$ is feed-forward. It corresponds to choosing a method for computing performances in feed-forward networks. 
Assumption ($\mathbf{A}_3$) and the fact that $H_\ell$ and $G$ are non decreasing are made without loss of generality since with greater arrival
  curves and smaller service curves induce more admissible trajectories, hence larger worst-case performances.

Since the minimum of two arrival curves for a flow is also an arrival curve for that flow, when $\cN$ is stable, we can consider that $\alpha^\ell$ is the minimum arrival
curves for $f^{\ell}$. Then it holds $\forall \ell\in\N_L$, $\alpha^\ell \leq H_\ell(\ba)$, where $(\alpha^\ell)_{\ell=1}^{L} = \ba$. We can write this latter inequality as a vector expression:

\begin{equation}
\label{eq:fix-point}
\ba\leq H(\ba),
\end{equation}
where $H = (H_\ell)_{\ell=1}^L$.  The next theorem shows the reverse: if the solutions of Equation~\eqref{eq:fix-point} are bounded, then the system is stable. 

\begin{theorem}
  \label{th:stop}
  Set $\mathcal{C} = \{\ba~|~ \ba \leq H(\ba)\}$ be the set of
  solutions of Equation~\eqref{eq:fix-point}, and $\ba_0 =
  \sup\{\ba~|~\ba\in\mathcal{C}\}$.  If $\ba_0$ is finite, then $\cN$ is
  globally stable and for all $\ell\in \N_L$, $\alpha_0^{\ell}$ is an arrival
  curve for the aggregation of flows $f_i$ at the input of server $\pi_{i,k}(1)$ for $(i,k)\in S_\ell$.
\end{theorem}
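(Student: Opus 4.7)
The plan is to establish the theorem by the classical \emph{stopped-time} method, using the monotone fixed-point structure of $\mathcal{C}$ to control the tightest arrival curves of the aggregated flows on larger and larger horizons. Fix an admissible trajectory of $\cN$. For each $\ell\in\sint{L}$, let $f^\ell$ denote the cumulative process of the aggregated flow $(f_{i,k})_{(i,k)\in S_\ell}$ read at the point where it is re-injected in $\cN^{\mathrm{FF}}$, i.e.\ just after the arc of $\A^{\mathrm{r}}$ on which $f_i$ was cut. For every horizon $T\geq 0$, define the tightest arrival curve of $f^\ell$ observed on $[0,T]$,
\[
\alpha^{\ell,T}(\tau):=\sup\bigl\{\,f^\ell(s+\tau)-f^\ell(s)\,:\,0\leq s,\ s+\tau\leq T\,\bigr\},
\]
and write $\ba^T:=(\alpha^{\ell,T})_{\ell=1}^L$. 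By construction $\alpha^{\ell,T}\in\cF$, and $T\mapsto \ba^T$ is non-decreasing.

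The key step is to show that $\ba^T\in\mathcal{C}$. The restriction of the trajectory of $\cN$ to $[0,T]$ is in particular an admissible trajectory of the feed-forward unfolding $\cN^{\mathrm{FF}}$, in which each aggregated flow $f^\ell$ enters at its injection point with arrival curve $\alpha^{\ell,T}$. Applying $(\mathbf{A}_1)$ to this trajectory, $H_\ell(\ba^T)$ is an arrival curve for the aggregation of the pieces $(f_{i,k-1})_{(i,k)\in S_\ell}$ at the end of their paths in $\cN^{\mathrm{FF}}$. By construction of the cut, this end-of-path data carries exactly the same bits as the injection of $f^\ell$ at the start of its next piece, so $H_\ell(\ba^T)$ is itself an arrival curve of $f^\ell$ on $[0,T]$. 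Since $\alpha^{\ell,T}$ is the tightest such curve, $\alpha^{\ell,T}\leq H_\ell(\ba^T)$ coordinate-wise, i.e.\ $\ba^T\leq H(\ba^T)$. Consequently $\ba^T\in\mathcal{C}$ and, by the definition of $\ba_0=\sup\mathcal{C}$, $\ba^T\leq \ba_0$ uniformly in $T$.

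Letting $T\to\infty$, monotonicity yields $\alpha^\ell:=\sup_T \alpha^{\ell,T}$, which lies in $\cF$ and is bounded above by $\alpha_0^\ell<\infty$ by hypothesis; it is a valid global arrival curve for $f^\ell$ at its injection point in $\cN^{\mathrm{FF}}$. Equipped with the curves $(\alpha_0^\ell)_\ell$ together with the original $(\alpha_i)_i$, every flow of $\cN^{\mathrm{FF}}$ is now constrained by a proper, finite arrival curve. Since $\cN^{\mathrm{FF}}$ is feed-forward and the service curves are strict rate-latency, a standard propagation along the topological order of $\cN^{\mathrm{FF}}$ yields a finite worst-case backlog and a bounded backlogged period at every server; both transfer back to $\cN$ via the identification between trajectories, so $\cN$ is globally stable.

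The step I expect to be most delicate is the identification argument in the second paragraph: one must carefully justify that the restriction of a trajectory of $\cN$ to $[0,T]$ is indeed a trajectory of $\cN^{\mathrm{FF}}$ once the flows cut by $\A^{\mathrm{r}}$ are re-interpreted as external inputs with arrival curves $\alpha^{\ell,T}$, and that $H_\ell(\ba^T)$ then bounds the very data defining $\alpha^{\ell,T}$. Once this is clean, the rest is a routine monotonicity argument in $\mathcal{C}$ followed by the limit $T\to\infty$, leveraging $(\mathbf{A}_3)$ to ensure that increasing the stopped curves cannot break the inequality.
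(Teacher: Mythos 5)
Your skeleton is the paper's stopped-time argument, and the monotone part is right: once $\ba^T\in\mathcal{C}$ is established, $\ba^T\leq\ba_0$ follows from the definition of the supremum, and letting the horizon grow shows $\ba_0$ is a valid arrival curve family, from which stability is deduced. The genuine gap is exactly at the step you flag, and as written it cannot be closed from the stated assumptions. Assumption $(\mathbf{A}_1)$ only guarantees that $H_\ell(\ba)$ bounds the end-of-path aggregates for trajectories whose injected aggregated flows are constrained by $\ba$ \emph{on all of} $\R_+$. Your curves $\alpha^{\ell,T}$ constrain $f^\ell$ only on the window $[0,T]$; after $T$ the actual injections may violate them, and nothing in $(\mathbf{A}_1)$--$(\mathbf{A}_3)$ asserts any causality or restriction property of $H_\ell$ (i.e., that its output bound on windows inside $[0,T]$ depends only on the inputs on $[0,T]$) -- $H_\ell$ is an abstract monotone map, not a concrete deconvolution formula. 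So the assertion ``$H_\ell(\ba^T)$ is an arrival curve of $f^\ell$ on $[0,T]$'' is precisely what needs proof. There is also a technical defect: for $\tau>T$ your supremum ranges over an empty set, so $\alpha^{\ell,T}$ is not an element of $\cF$ unless you cap it, as the paper does with $t\mapsto\alpha_i(t\wedge\tau)$.

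The paper closes this differently: it stops the \emph{external arrivals} at time $\tau$, so every input is $\alpha_i(\cdot\wedge\tau)$-constrained globally; the stopped network carries a finite total amount of data, hence is globally stable, so the minimal arrival curves $\ba^\tau$ of the aggregated flows exist as genuine global constraints. Then $(\mathbf{A}_1)$ applies verbatim to give $\ba^\tau\leq H^\tau(\ba^\tau)$, and $(\mathbf{A}_3)$ gives $H^\tau(\ba^\tau)\leq H(\ba^\tau)$ (a step your version silently drops because you never modify the external curves), so $\ba^\tau\in\mathcal{C}$ and $\ba^\tau\leq\ba_0$; the only informal step left is transferring the bound from the stopped to the unstopped process as $\tau\to\infty$. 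To repair your proof you would essentially have to perform this stopping (or add an explicit causality hypothesis on $H_\ell$), at which point it coincides with the paper's. Two minor further points: the paper also shows $\mathcal{C}$ is closed under $\lor$, so $\ba_0$ is attained (not needed on your route, since you only compare against the supremum), and your concluding stability argument invokes rate-latency service curves although the theorem is stated for general curves -- boundedness of backlogged periods should be derived from finiteness of $\ba_0$ together with local stability, not from a particular curve shape.
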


The proof of this theorems follows exactly the same lines as the stopped-time method described in~\cite{LT2001} or~\cite{Chang2000}. 

\begin{proof}
   First, $\ba_0$ exists, as $\mathcal{C}$ contains a maximum element:
  if $\ba_1$ and $\ba_2$ are two elements in $\mathcal{C}$, then for all
  $\ell\in\N_L$, $\alpha_1^\ell \leq H_\ell(\ba_1) \leq H_\ell(\ba_1\lor \ba_2)$ and
  similarly, $\alpha_2^\ell \leq H_\ell(\ba_1\lor \ba_2)$, so
  $\alpha_1^\ell \lor\alpha_2^\ell \leq H_\ell(\ba_1\lor \ba_2)$, and $\ba_1\lor \ba_2
  \in \mathcal{C}$.
  
  We use the stopped-time method. Consider that the arrivals to the
  network stop at time $\tau>0$: for each flow $f_i$, an arrival curve
  is then $\alpha_i^\tau: t\mapsto \alpha_i(t\land \tau)$. The total
  amount of data for each flow $f_i$ is also bounded by
  $\alpha_i(\tau)$, so the network is globally stable.

Let $\ba^\tau = (\alpha^{\tau,\ell})_{\ell \in \N_L}$ be the family of the {\em
    minimal arrival curve} of the aggregated flows $f_i$ at server $\pi_{i,k}(1)$ for $(i,k)\in S_\ell$ then
$$\ba^{\tau} \leq H^{\tau}(\ba^{\tau}) \leq
H(\ba^{\tau}),$$
where $H^{\tau}$ is obtained the same way as $H$, but
replacing $\alpha_i$ by $\alpha_i^{\tau}$. The first inequality comes from the stability of $\cN$ for the stopped process, and the second from Assumption~($\mathbf{A}_3$).

For all $\tau>0$, $\ba^\tau \in\mathcal{C}$, so $\ba_0\geq \ba^{\tau}$,
and $\ba_0$ is a family of arrival curves that is valid for all $\tau>0$. Then it is valid for the whole unstopped process, and the system is stable if $\ba_0$ is finite. 
\end{proof}

\subsubsection{One-stage optimization problem}

We are now ready to give a mathematical programming problem  that computes worst-case performances upper bounds for arbitrary networks: 

\begin{equation}\label{1stage}\begin{array}{|l|}
  \hline
  \text{Maximize }G(\ba) \text{ such that }\ba \leq H(\ba).\\
  \hline
\end{array}
\end{equation}

\begin{theorem}
Under Assumptions ($\mathbf{A}_1$), ($\mathbf{A}_2$) and ($\mathbf{A}_3$), the solution of the optimization problem of Equation~\eqref{1stage} is an upper bound of the performance $P$ of the network. 
\end{theorem}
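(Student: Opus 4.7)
The plan is to reduce this directly to Theorem~\ref{th:stop}. Let $\ba^{\star} = (\alpha^{\star,\ell})_{\ell \in \sint{L}}$ denote the family of minimal arrival curves of the actual aggregated flows $f^{\ell}$ at the inputs of their respective path-segments in $\cN^{\mathrm{FF}}$, in the (unstopped) network $\cN$. If the optimization problem is infeasible or its value is $+\infty$, the conclusion is vacuous, so assume the value is finite; then in particular $\ba_0 = \sup\mathcal{C}$ is finite, so by Theorem~\ref{th:stop} the network is globally stable and $\ba^{\star}$ is well defined.

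Next, I would argue that $\ba^{\star} \in \mathcal{C}$. By assumption $(\mathbf{A}_1)$, $H_\ell(\ba^{\star})$ is a valid arrival curve for the aggregation of the flows $(f_{i,k-1})_{(i,k)\in S_\ell}$ at the end of their segments, hence also for $f^{\ell}$ (which consists of the prolongations of these same flows into the next segment). Because $\alpha^{\star,\ell}$ was chosen as the \emph{minimal} arrival curve of $f^{\ell}$, this forces $\alpha^{\star,\ell} \leq H_\ell(\ba^{\star})$ for every $\ell$, i.e.\ $\ba^{\star} \leq H(\ba^{\star})$. Thus $\ba^{\star}\in\mathcal{C}$.

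From Theorem~\ref{th:stop} (and the maximality argument inside its proof) $\ba_0$ is actually attained in $\mathcal{C}$, hence $\ba^{\star} \leq \ba_0$ componentwise. Assumption $(\mathbf{A}_2)$ gives that the true performance satisfies $P \leq G(\ba^{\star})$ (using the actual arrival curves of the aggregated flows), and the monotonicity of $G$ in assumption $(\mathbf{A}_3)$ then yields
\begin{equation*}
P \;\leq\; G(\ba^{\star}) \;\leq\; G(\ba_0) \;=\; \max_{\ba \in \mathcal{C}} G(\ba),
\end{equation*}
where the last equality uses that $\ba_0\in\mathcal{C}$ is the pointwise maximum of $\mathcal{C}$ together with the monotonicity of $G$. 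This is precisely the value of the program~\eqref{1stage}.

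The only delicate step is the second one, namely verifying $\ba^{\star} \in \mathcal{C}$: one has to be sure that the arrival curves produced by $H_\ell$ apply to the same aggregated flow $f^{\ell}$ whose minimal arrival curve is $\alpha^{\star,\ell}$, so that the inequality $\alpha^{\star,\ell} \leq H_\ell(\ba^{\star})$ follows from minimality. The rest is bookkeeping on top of Theorem~\ref{th:stop} and the monotonicity built into $(\mathbf{A}_3)$.
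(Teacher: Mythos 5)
Your main chain of inequalities is exactly the paper's intended argument: the vector $\ba^{\star}$ of minimal arrival curves of the aggregated flows is feasible for \eqref{1stage} (this is precisely Equation~\eqref{eq:fix-point}, established in the text preceding Theorem~\ref{th:stop}), assumption $(\mathbf{A}_2)$ gives $P\leq G(\ba^{\star})$, and the monotonicity of $G$ gives $G(\ba^{\star})\leq \sup_{\ba\in\mathcal{C}}G(\ba)$; the paper compresses all of this into one sentence (``$\ba_0$ maximizes $G$ over $\mathcal{C}$''), optionally passing through the fact that $\ba_0$ itself is a valid arrival-curve family by Theorem~\ref{th:stop}. Note that you do not actually need $\ba_0$, nor its attainment in $\mathcal{C}$ (which the proof of Theorem~\ref{th:stop} only establishes for binary joins): $\ba^{\star}\in\mathcal{C}$ already yields $G(\ba^{\star})\leq\sup_{\ba\in\mathcal{C}}G(\ba)$, which is the value of the program.

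The one step that does not hold as written is your reduction of the unstable case: from ``the value of \eqref{1stage} is finite'' you infer ``$\ba_0=\sup\mathcal{C}$ is finite''. This is a non-sequitur: $G$ is only assumed non-decreasing, not coercive, so $\sup_{\ba\in\mathcal{C}}G(\ba)$ can be finite while some components of $\ba$ are unbounded over $\mathcal{C}$ (for instance if $G$ is insensitive to those components); hence you cannot invoke Theorem~\ref{th:stop} in this direction to obtain global stability and the existence of $\ba^{\star}$. Two repairs are available: (i) argue, as the paper implicitly does, under the stability of $\cN$, which is exactly the regime in which the minimal curves $\ba^{\star}$ and Equation~\eqref{eq:fix-point} were derived; or (ii) avoid stability altogether via the stopped-time curves: for every $\tau>0$ the stopped network is stable, its minimal curves $\ba^{\tau}$ lie in $\mathcal{C}$ (as in the proof of Theorem~\ref{th:stop}), so the performance of any trajectory observed up to time $\tau$ is at most $G(\ba^{\tau})\leq\sup_{\ba\in\mathcal{C}}G(\ba)$, and letting $\tau\to\infty$ gives $P\leq\sup_{\ba\in\mathcal{C}}G(\ba)$ whether or not $\cN$ is stable. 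With either patch your proof is correct and coincides with the paper's.
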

\begin{proof}
As $H_\ell$ and $G$ are non-decreasing,  $\ba^0$ maximizes $G$ among all the elements such that $\ba \leq H(\ba)$.
\end{proof}

This formulation is in fact equivalent to the {\em fix-point method}
that can be found in the literature, and that can be deduced from
Theorem~\ref{th:stop}. Indeed, the proof of that theorem ensures the
existence of a greatest fix-point. As $\ba_0$ is that fix-point, then
the performances can be directly computed as $G(\ba_0)$. 

This result is often used when there exists a unique fix-point to the
equation $\ba = H(\ba)$, which is also the largest solution of $\ba\leq H(\ba)$. Our formulation enables to apply the fix-point method in cases the uniqueness is not ensured. 

\subsubsection{Two-stage optimization problem}

The formulation as an optimization problem can be generalized, by
making advantage of two decompositions. For example, one could have a
feed-forward transformation of the network so that $\cN^{\mathrm{FF}}$
is a tree. Following example~\ref{ex:group}, there are two natural
ways to group flows. First considering singletons only, which
defines functions $G$ and $H_1$; second grouping flows according to
the arc that has been removed, which define functions $H_2$.

So the following mathematical program is obtained:

\begin{equation}\label{2stage}
\begin{array}{|rl|}
  \hline
  \text{Maximize } G(\ba)& \\ \text{ such that } &\ba\leq \ba_1,~ \ba \leq \ba_2\\
&\alpha_1^s \leq \alpha_2^\ell \leq \sum_{u\in S_\ell} \alpha_1^u,~\forall s\in S^\ell\\
  &\ba_1 \leq H_1(\ba_1),~\ba_2 \leq H_2(\ba_2)\\
  \hline
\end{array}
\end{equation}
\begin{theorem}
Under Assumptions ($\mathbf{A}_1$), ($\mathbf{A}_2$) and ($\mathbf{A}_3$), the solution of the optimization problem of Equation~\eqref{2stage} is an upper bound of the performance $P$ of the network. 
\end{theorem}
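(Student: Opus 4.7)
The plan is to extend the stopped-time argument used in the proof of Theorem~\ref{th:stop}, this time exhibiting a triple $(\ba, \ba_1, \ba_2)$ that is feasible for the program~\eqref{2stage} and whose value $G(\ba)$ majorizes the actual performance $P$. Fix $\tau > 0$ and stop every arrival at time $\tau$, so that the network is trivially stable and each flow admits a minimal arrival curve in $\cN^{\mathrm{FF}}$. I would let $\ba_1^\tau$ collect these minimal arrival curves indexed by the singletons (one per flow $f_{i,k}$ at the entry of its path) and $\ba_2^\tau$ collect the minimal arrival curves indexed by the coarser grouping $S_1, \ldots, S_L$.

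First, I would observe that each of $\ba_1^\tau$ and $\ba_2^\tau$ satisfies its own fix-point inequality exactly as in the proof of Theorem~\ref{th:stop}: for $i \in \{1,2\}$, $\ba_i^\tau \leq H_i^\tau(\ba_i^\tau) \leq H_i(\ba_i^\tau)$, where the first inequality uses stability of the stopped network (so the curves output by $H_i^\tau$ are genuine arrival curves) and the second uses Assumption~($\mathbf{A}_3$) applied to $\alpha_i^\tau \leq \alpha_i$. Second, I would verify the consistency constraint $\alpha_1^{s,\tau} \leq \alpha_2^{\ell,\tau} \leq \sum_{u \in S_\ell} \alpha_1^{u,\tau}$ for every $s \in S_\ell$: the left inequality holds because a singleton flow is pointwise dominated by the aggregate $f^\ell$ containing it, and the right inequality holds because the sum of the singleton arrival curves is itself a valid arrival curve for the aggregate, which therefore dominates its minimal arrival curve.

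To build $\ba^\tau$, I would take each of its components to be the minimum of the bound coming directly from $\ba_2^\tau$ and the bound induced by $\ba_1^\tau$ through summation over its group; since the minimum of two arrival curves of a given flow is again an arrival curve of that flow, this yields a family of arrival curves for the aggregate flows that satisfies both $\ba^\tau \leq \ba_1^\tau$ and $\ba^\tau \leq \ba_2^\tau$. Hence $(\ba^\tau, \ba_1^\tau, \ba_2^\tau)$ is feasible, so the optimum of~\eqref{2stage} is at least $G(\ba^\tau)$, which by Assumption~($\mathbf{A}_2$) upper-bounds the stopped performance $P^\tau$. Letting $\tau \to \infty$, $P^\tau$ increases to $P$ and the bound transfers to the original network.

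The main obstacle I foresee is a clean treatment of the inequalities that mix the two indexings: the constraint $\ba \leq \ba_1$ compares objects of different formal dimensions and must be read with the convention that each component $\alpha^\ell$ is compared to the sum of the $\alpha_1^s$ over $s \in S_\ell$ (or, alternatively, that $\ba$ is silently refined into a singleton-indexed vector before the comparison). Once this convention is pinned down, the argument is essentially the two-indexing analogue of Theorem~\ref{th:stop}, with monotonicity from Assumption~($\mathbf{A}_3$) doing the heavy lifting.
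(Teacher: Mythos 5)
Your proposal is correct and follows essentially the same route as the paper: exhibit the network's actual (minimal) arrival curves — at both the singleton and the grouped level — as a feasible point of~\eqref{2stage}, verify the consistency constraints $\alpha_1^s \leq \alpha_2^\ell \leq \sum_{u\in S_\ell}\alpha_1^u$ exactly as you do, and conclude that the maximum dominates the true performance. Your extra detour through the stopped-time construction and the limit $\tau\to\infty$ is a more careful rendering of what the paper leaves implicit (it simply invokes the smallest arrival curves directly, relying on the machinery of Theorem~\ref{th:stop}), and your remark about the indexing convention in the constraint $\ba\leq\ba_1,\ \ba\leq\ba_2$ points to a genuine imprecision in the paper's formulation rather than a flaw in your argument.
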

\begin{proof}
  Let $\ba_1$ be the vector of the smallest arrival curves for the individual flows, and $\ba_2$ for the aggregated flows. If $s\in S_\ell$, then $\alpha_s \leq \alpha^{\ell}$, as flow $f_s$ is part of the aggregated flow $f^\ell$, and $\alpha^\ell$ is less than the sum of the arrival curves of all the aggregated flows. Hence,  $\alpha_1^s \leq \alpha_2^\ell \leq \sum_{u\in S_\ell} \alpha_1^u$ is satisfied.
\end{proof}

The problem of Eq~\eqref{2stage} has no natural equivalent as a fix-point equation, and can be directly generalized for more than two stage, and different decompositions.
We will see that this optimization problem is slightly improved in the linear model.

\section{Worst-case backlog in tree networks}
\label{sec:exact}
In this section, we focus on tree networks and give an algorithm to
compute exact worst-case backlog in the linear model.
The
algorithm is a generalization of the one given in~\cite{BN15} with the
following differences:
\begin{enumerate}
\item our algorithm computes a worst-case backlog at a server;
\item it can be applied to compute the worst-case backlog at a server for any
  set of flows crossing this server;
\item it is valid for any tree topology.
\end{enumerate}
The two algorithms and their proof are based on the same ideas, so we skip the detailed proof here. The complete proof is in Appendix~\ref{appA}.

Let us first give some additional notations used in the algorithm to
describe a tree network. First, its induced graph is a tree
directed to vertex $n$, whose output degree 0. Each other vertex $j$
has output degree 1. We denote by $j^\bu$ its successor and assume
that $j<j^{\bu}$ and set $n^{\bu} = n+1$ by convention. The set
of predecessor of a vertex is ${}^{\bu}\!j = \{k~|~k^{\bu} =
j\}$. There exists at most one path between two vertices $j$ and $k$,
denoted ${j\rightsquigarrow k}$.
Finally, if there exists a path from $j$ to $k$,
${}^{\bu \!j}\!k$ is the predecessor of $k$ of this path.

Suppose that we are interested in computing the worst-case backlog at
server $n$ for some flows crossing it. We denote by $I\subseteq
\sint{m}$ those flows of interest. 

\begin{itemize}
\item $r_j^k = \sum_{i\in\fl(j)\setminus I, \pi_i(\ell_i)=k}r_i $ is the arrival rate at
  server~$j$ for all flows ending at server~$k$ and crossing server~$j$ that are not of interest;
\item $r^*_j = \sum_{i\in I\cap \fl(j)}r_i $ is the arrival rate of the flows of interests that cross server $j$.
\end{itemize}

\begin{algorithm}[htbp]
  \Begin{
    {$\xi_n^n\gets r^*_n / R_n-r^n_n$\;
    $Q = \mathtt{queue}({}^\bu\!n)$\;
    \While{$Q \neq \emptyset$}
    {
      $j=Q[0]$\;
      $k\gets n$\;
      \While{{ $\xi^k_{j^\bu}\!>\!(r^*_j \!+ \!\sum_{\ell\in k^{\bu}\path n} \!\xi_{j^{\bu}}^\ell r_j^\ell)/ (R_j-\sum_{\ell\in j \path k}r_j^\ell)$}} 
      {
         $\xi^k_j\gets \xi^k_{j^\bu}$\;
         $k\gets {}^{\bu j}\!k$\;
       }
       \For{$\ell$ {\bf from} $j$ {\bf to} $k$}
       {
         {$\xi^\ell_{j} \gets (r^*_j + \sum_{\ell'\in k^{\bu} \path n} \xi_{j^\bu}^{\ell'} r_j^{\ell'})/  (R_j-\sum_{\ell\in j \path k}r_j^{\ell'})$}\;
       }
       $Q \gets \mathtt{enqueue}(\mathtt{dequeue}(Q,j), {}^\bu \!j)$\;
     }
     \lFor {$j$ {\bf from} $1$ {\bf to} $n$}
     {
       $\rho_j\gets r^*_j+\sum_{\ell \in j \path n} \xi_j^\ell r_j^\ell$
     }
     \For {$i$ {\bf from} $1$ {\bf to} $m$}
     {
       \lIf{$i\in I$}{$\varphi_i\gets 1$ }\lElse{$\varphi_i\gets \xi_{\pi_i(1)}^{\pi_i(\ell_i)}$}
     }} }  

\caption{Worst-case backlog algorithm}
\label{algo:wcb}
\end{algorithm}

\begin{theorem}
  \label{th:wcb}
  Consider a tree network with $n$ servers offering rate-latency strict
  service curves $\beta_{R_j,T_j}$, and $m$ flows with leaky-bucket
  arrival curves $\gamma_{b_i,r_i}$. Let $I$ be a subset of flows
  crossing server $n$. Then there exists $(\rho_j)_{j\in\N_n}$ and
  $(\varphi_i)_{i\in \N_n}$ such that the worst-case backlog at server
  $n$ for flows in $I$ is
  \begin{equation}
    \label{eq:wcb}
    B = \sum_{j=1}^n\rho_j T_j + \sum_{i=1}^m \varphi_i b_i,
  \end{equation}
  where the
  coefficients $\rho_j$ and $\varphi_i$ depend only on $r_i$ and $R_j$
  and are computed by Algorithm~\ref{algo:wcb}. This algorithm runs in time $O(n^2+m)$.
\end{theorem}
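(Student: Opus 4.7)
The plan is to adapt the worst-case trajectory analysis of~\cite{BN15} for tandem networks to tree topologies, then verify that Algorithm~\ref{algo:wcb} realizes the resulting closed-form coefficients.

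First I would fix an instant $t^*$ at which the worst-case backlog at server $n$ for flows in $I$ is attained, and let $(s_n,t^*]$ be its backlogged period at $n$. Walking upstream in the tree, for each server $j$ on a path to $n$ I would identify a ``relevant'' backlogged period $(s_j,u_j]$ at $j$, where $u_j$ is the latest instant at which bits departing from $j$ still contribute to the backlog at $n$ at time $t^*$. Strictness of $\beta_{R_j,T_j}$ together with the leaky-bucket arrivals yields, for each $j$ and each $k\in j\path n$, a relation coupling $u_j-s_j$ with the downstream lengths $(u_\ell-s_\ell)_{\ell\in j^\bu\path n}$. I would then \emph{define} $\xi_j^\ell=(u_j-s_j)/(u_\ell-s_\ell)$ for $\ell\in j^\bu\path n$ and $\xi_j^j=1$, so that this vector encodes how the relevant backlogged period at $j$ scales with respect to the ones further down.

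Second I would derive the closed-form upper bound. At each server $j$, the bits leaving $j$ during $(s_j,u_j]$ that contribute to the backlog at $n$ amount to at least $R_j\bigl((u_j-s_j)-T_j\bigr)_+$ by strictness, while by the arrival constraints they amount to at most $r_j^*(u_j-s_j)+\sum_{\ell\in j^\bu\path n}r_j^\ell(u_\ell-s_\ell)$ plus the bursts of flows joining at or upstream of $j$. Substituting $u_\ell-s_\ell=(u_j-s_j)/\xi_j^\ell$ and rearranging gives, at server $j$, an inequality of the form $R_j T_j\le\bigl(r_j^*+\sum_{\ell\in j\path n}\xi_j^\ell r_j^\ell\bigr)(u_j-s_j)+\text{burst terms}$. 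Summing over $j\in\sint{n}$, unfolding the bursts back to their sources (which produces $\varphi_i=\xi_{\pi_i(1)}^{\pi_i(\ell_i)}$ for cross-traffic and $\varphi_i=1$ for flows of interest), and collecting the coefficients of $T_j$ and $b_i$ yields exactly the announced identity $B=\sum_j\rho_j T_j+\sum_i\varphi_i b_i$ with $\rho_j=r_j^*+\sum_{\ell\in j\path n}\xi_j^\ell r_j^\ell$.

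Third I would verify the algorithm. The inner while-loop encodes the following dichotomy for $\xi_j^k$: either the relevant backlogged period at $j$ is ``long enough'' to subsume the one at $k$ (so $\xi_j^k$ inherits $\xi_{j^\bu}^k$ and one walks $k$ up the path towards $n$), or it is ``short'' and $\xi_j^\ell$ for all $\ell\in j\path k$ receives the common local value $(r_j^*+\sum_{\ell'\in k^\bu\path n}\xi_{j^\bu}^{\ell'}r_j^{\ell'})/(R_j-\sum_{\ell'\in j\path k}r_j^{\ell'})$. The threshold in the while-condition is precisely the crossing point between the two regimes; monotonicity of $\xi_{j^\bu}^{\cdot}$ along $j^\bu\path n$, which I would prove by induction along the BFS order, guarantees uniqueness of the transition index $k$ and termination of the loop. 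A matching worst-case trajectory, built by bursting every flow at the start of its relevant backlogged period and operating each server in exact-service mode on $(s_j,u_j]$ and in infinite-service mode outside it, shows the bound is tight. The complexity $O(n^2+m)$ follows because the BFS visits each server once, the inner while-loop walks at most $O(n)$ ancestors per iteration, and the final $\varphi_i$ assignment is $O(m)$.

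The main obstacle is the justification of the inner case distinction and the proof that the greedy choice of transition point $k$ corresponds to an \emph{admissible} worst-case trajectory rather than merely a valid upper bound. This is the tree analogue of the delicate combinatorial argument of~\cite{BN15}, complicated here by the fact that cross-traffic rates $r_j^\ell$ appear simultaneously for several exit points $\ell$ along $j\path n$, and the coupling between these exit points is what dictates whether extending the backlogged period at $j$ through the one at $k$ is beneficial or not.
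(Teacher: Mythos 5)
There is a genuine gap, and it sits exactly where you flag it: the ``main obstacle'' you postpone is the actual content of the proof. The paper's argument (Appendix~\ref{appA}) first reduces to a restricted class of trajectories (SDF ordering, a single backlogged period per server with minimal service, arrivals maximal from $t_j$, flows of interest flushed instantaneously), and then performs a backward induction in which $B_j$ is written as a linear function of the backlogs $x_k$ transmitted into server $j$, with coefficients $\xi_j^k$; the heart of the proof is Lemma~\ref{lem:cases}, which compares the candidate scenarios $B_j^k$ (serve completely the cross-flows ending at $j,\dots,k$, flush the rest) via explicit coefficient inequalities, shows the maximum is attained at the threshold index detected by the while-loop, and simultaneously propagates the monotonicity property $\xi_j^k$ along destinations that makes the threshold well defined. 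Your proposal never establishes this: you assert that monotonicity ``guarantees uniqueness of the transition index'' and that the threshold is ``precisely the crossing point between the two regimes,'' but you give no argument that the greedy per-server choice dominates all other (including mixed) service scenarios, which is precisely the delicate comparison carried out in Lemma~\ref{lem:cases} (and the exclusion of mixed scenarios borrowed from~\cite{BN15}).

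There is also a structural flaw in your second step. You \emph{define} $\xi_j^\ell=(u_j-s_j)/(u_\ell-s_\ell)$ from an arbitrary worst-case trajectory, which makes these quantities trajectory-dependent, and then substitute them as if they were the rate-only constants produced by Algorithm~\ref{algo:wcb}; the substitution $u_\ell-s_\ell=(u_j-s_j)/\xi_j^\ell$ is a tautology under your definition and does not connect the trajectory to the algorithm's coefficients. Moreover, summing per-server inequalities of the form $R_jT_j\leq(r_j^*+\sum_\ell \xi_j^\ell r_j^\ell)(u_j-s_j)+\text{bursts}$ does not by itself bound the backlog at server $n$: without the reduction to the restricted trajectory class you cannot account for cross-traffic that is served and leaves before contributing, nor justify that every burst $b_i$ enters with weight exactly $\xi_{\pi_i(1)}^{\pi_i(\ell_i)}$. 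So both the upper bound and the matching lower bound (tightness) remain unproved as written; to close the argument you essentially need the paper's induction on $B_j$, its linearity statement, and the scenario-comparison lemma.
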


If there is only one flow for each possible source/destination pair, then $m\leq n^2/2$ and the algorithm runs is $O(n^2)$. 

\begin{proof}[Sketch of the proof]
  The proof of the theorem is based on the construction of an admissible 
  trajectory ({\em i.e.} cumulative functions for each flow, at the
  input/output of each server in the path of this flow, that respect
  the input and output constraints given by the arrival and service
  curves) whose backlog at server $n$ is maximal for the flows in $I$
  (we call it a {\em worst-case trajectory}).

  Similar to the proof in~\cite{BN15}, the proof is in two steps. First, we show that there exists a
  worst-case trajectory that satisfy some properties. The second step
  is to construct a worst-case trajectory among the trajectories
  having those properties.

  {\em Properties of a worst-case trajectory:}  suppose
  that the worst case backlog is obtained at time $t_{n+1} =
  t_{n^\bu}$. There exists a
  worst-case trajectory that satisfy the following properties.
  \begin{enumerate}
  \item The
  service policy is SDF (shortest-to-destination-first).
\item For each server $j$, there is a unique backlogged period $[t_j,
  t_{j^\bu}]$, where the service offered is as small as possible. 
\item The arrival function of flow $f_i$ entering the system at server
  $j$ 
is maximal from $t_j$, the start of
  the backlogged period of server $j$ 
for all
  $t>t_j$ and 0 otherwise.
\item Data from the flows of interest in $I$ crossing server $j$ are
  instantaneously served at time $t_{j^{\bu}}$ and are all in server $n$ at time $t_{n+1}$. 
  \end{enumerate}

These properties are straightforward generalizations from~\cite{BN15} to trees.

\smallskip

  {\em Worst-case trajectory with the properties:} Once the set of
  trajectories has been restricted to the one satisfying the four
  properties above, the only optimization remaining is choosing the dates
  $t_j$. Indeed, if dates $t_j$ are fixed, the four properties above
  exactly determines the trajectory. Intuitively, the larger the
  backlog transmitted to the next server, the larger the backlog at
  server $n$. The maximization of the transmitted backlogs is done by
  a backward induction, from the root of the tree (server $n$) to the
  leaves that is detailed in Appendix~\ref{appA}. This
  optimization is then translated into Algorithm~\ref{algo:wcb}. 
\end{proof}

\paragraph{Worst-case delay}
The worst-case delay of a flow can be deduce from the worst-case backlog when $I$ is reduced to this flow. 

\begin{corollary}
  \label{cor:wcd}
  Suppose that flow $1$ crosses server $n$. Then the worst-case delay
  of flow $1$ starting at server $j$ and ending at server $n$ is
  $$\Delta = \frac{B-b_1}{r_1} + \frac{\xi_j^nb_1}{r_1},$$ where $B$ and $\xi_j^n$ are the worst-case backlog and coefficient obtained from Algorithm~\ref{algo:wcb} when $I = \{1\}$. 
\end{corollary}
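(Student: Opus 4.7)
The plan is to interpret $B$ and $\xi_j^n$ (both obtained by Algorithm~\ref{algo:wcb} with $I=\{1\}$) as encoding an effective end-to-end rate-latency service curve seen by flow $1$ between server $j$ and server $n$ in the worst-case trajectory of Theorem~\ref{th:wcb}, and then to apply the classical horizontal-distance formula for the delay.

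First, I would set $I=\{1\}$ in Algorithm~\ref{algo:wcb} and invoke Theorem~\ref{th:wcb}: this yields both the worst-case backlog $B=\sum_{j'}\rho_{j'}T_{j'}+\sum_{i}\varphi_i b_i$ at server~$n$ for flow~$1$ and the explicit trajectory described in the sketch. Since $1\in I$, we have $\varphi_1=1$, so the burst $b_1$ contributes exactly $b_1$ to $B$; in that trajectory flow~$1$ enters server~$j$ as the maximal $\gamma_{b_1,r_1}$-arrival starting from the beginning $t_j$ of its backlogged period, and by property~4 of the sketch all flow~$1$ data still in the network at time $t_{n+1}$ is concentrated at server~$n$.

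Second (and this is the main step), I would argue that in this trajectory flow~$1$ sees an effective end-to-end strict rate-latency service curve $\beta_{R^\ast,T^\ast}$ from server~$j$ to server~$n$, with rate $R^\ast = r_1/\xi_j^n$ and latency $T^\ast = (B-b_1)/r_1$. The rate identification follows from unfolding the recursion defining $\xi_j^k$ in Algorithm~\ref{algo:wcb}: the base case $\xi_n^n=r_n^\ast/(R_n-r_n^n)$ is exactly the ratio of flow~$1$'s rate to the residual service rate available to it at server~$n$, and the inductive update rule composes, along the path $j\path n$, the residual rates after absorbing the cross traffic that branches off along the way. Once $R^\ast$ is determined, the identity $b_{\max}(\gamma_{b_1,r_1},\beta_{R^\ast,T^\ast})=b_1+r_1T^\ast$ together with the known value $B$ of the worst-case backlog forces $T^\ast = (B-b_1)/r_1$. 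The worst-case delay of flow~$1$ is then the horizontal distance between $\gamma_{b_1,r_1}$ and $\beta_{R^\ast,T^\ast}$, which equals $T^\ast + b_1/R^\ast$. Substituting the expressions for $T^\ast$ and $R^\ast$ gives exactly
\[
\Delta=\frac{B-b_1}{r_1}+\frac{\xi_j^n b_1}{r_1}.
\]

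The main obstacle is justifying the rate identification $R^\ast=r_1/\xi_j^n$. This requires unfolding the worst-case trajectory construction in Appendix~\ref{appA} and proving, by backward induction along the path from $n$ down to $j$, that the slope of the cumulative departure of flow~$1$ at server~$n$ during its saturated regime is exactly $r_1/\xi_j^n$, so that the $\xi$-recursion captures the residual-rate composition faithfully along the path. Once this rate-latency identification is in place, the remainder of the proof is the elementary horizontal-distance computation for token-bucket and rate-latency curves, and the fact that the maximum horizontal distance is attained at the burst level $y=b_1$ because $r_1\leq R^\ast$.
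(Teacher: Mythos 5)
The central step of your argument---that in the worst case flow $1$ sees an effective end-to-end rate-latency service curve $\beta_{R^*,T^*}$ with $R^*=r_1/\xi_j^n$ and $T^*=(B-b_1)/r_1$, against which both the worst-case backlog and the worst-case delay are simultaneously tight---is precisely what would need to be proved, and you only assert it and describe how one might attack it. Since $T^*$ is \emph{defined} by forcing the backlog identity $B=b_1+r_1T^*$, the only substantive content left is the delay claim $\Delta=T^*+b_1/R^*$, which is the corollary itself; the argument is circular unless the service-curve property and the tightness of the horizontal deviation are established independently. Note also that tightness of a backlog bound with respect to some end-to-end curve does not imply tightness of the delay bound with respect to the same curve: the trajectories achieving the maximal vertical and horizontal deviations are in general different, so even granting the existence of such a curve you would still have to exhibit a delay-achieving trajectory. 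Finally, the route you propose for the rate identification is doubtful in itself: in the worst-case trajectory of Theorem~\ref{th:wcb} (Appendix~\ref{appA}) the flow of interest is served last (SDF) and its data accumulates at server $n$ until $t_{n+1}$, so there is no regime in which flow $1$ departs server $n$ at rate $r_1/\xi_j^n$; the coefficient $\xi_j^n$ is a sensitivity coefficient (the marginal increase of the final backlog per unit of burst of a cross flow routed from $j$ to $n$), produced by the per-server case analysis of Lemma~\ref{lem:cases}, and $r_1/\xi_j^n$ is in general not a residual rate of any single linear regime of the trajectory.

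The paper's proof avoids all of this by a flow-splitting reduction that stays entirely within the backlog framework of Theorem~\ref{th:wcb}: replace $f_1$ by a pure-burst cross flow with arrival curve $t\mapsto b_1$ and a pure-rate flow of interest $f_0$ with arrival curve $t\mapsto r_1t$. By the result of~\cite{BN15}, the worst-case delay is attained by the last bit of the burst $b_1$, so the worst-case backlog of $f_0$ at server $n$ equals $r_1(t_{n+1}-t_j)=r_1\Delta$; in the modified instance the burst $b_1$ is no longer of interest, so its coefficient in Equation~\eqref{eq:wcb} changes from $\varphi_1=1$ to $\xi_j^n$, giving $B'=B-b_1+\xi_j^nb_1$ and hence $\Delta=B'/r_1$. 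If you want to keep your effective-service-curve viewpoint, you would essentially have to re-derive this trajectory argument to justify $R^*=r_1/\xi_j^n$, at which point the paper's reduction is both shorter and already rigorous; as it stands, your proposal has a genuine gap at its main step.
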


\begin{proof}
  Suppose the flow of interest is $f_1$, with starting at server
$\pi_1(1) = j$ and ending at server $\pi_1(\ell_1) = n$. We are
interested in computing the worst-case delay of this flow. From~\cite{BN15}, the worst
case delay is obtained for the bit of data $b_1$. We can then
do the following modification: assume there are $m+1$ flows. Flows
$f_2$ to $f_m$ remain unchanged, flow $f_1$ has arrival curve
$t\mapsto b_1$ and flow $f_0$ has arrival curve $t\mapsto r_1 t$. The
flow on interest is now flow $f_0$. The worst-case backlog is obtained
at time $t_{n+1}$, and is $r_1(t_{n+1}-t_{j})$. It is
maximal when $t_{n+1}-t_{j}$ is, and then is the worst-case delay for
the first bit is data of flow $f_0$, which is equivalent to bit of
data $b_1$ of the original network.

Let $B$ be the worst-case backlog obtained with Algorithm~\ref{algo:wcb} with the original network. With the modified network, the backlogs becomes $B' = B-b_1 + \xi_j^n b_1$, as the new flow $f_1$ is not of interest (in the transformation, $\varphi_1$ changes from 1 to $\xi_j^n$), then the worst-case delay is $B'/r_1$, which corresponds to the desired result. 
\end{proof}

\paragraph{Application to sink-trees}
  Sink-trees are tree topologies where the destination of every flow
  is the root (node $n$). In this special case, each iteration of the
  external loop (lines 5-13) can be performed in constant time (there
  is only one test to perform). Moreover, the number of flows is at
  most the number of servers. As a consequence, the algorithm can be
  performed in $O(n)$. This type of topology has been studied in the
  context of {\em Sensor Network Calculus}. In \cite{BS15}, the
  authors give a close-form formula for the maximum backlog at the
  root and the end-to-end delay of a flow of interest. 

  Concerning the maximum backlog, this corresponds in our algorithm to
  the case where every flow is a flow of interest, so $\phi_i = 1$ for
  each flow $i$ and $\rho_j = r_j^*$. It is easy to see that the
  formula is the same as in~\cite[Theorem~14]{BS15}. 

 \begin{figure}[htbp]
   \centering
   \begin{picture}(0,0)%
\includegraphics{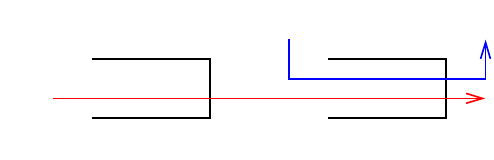}%
\end{picture}%
\setlength{\unitlength}{4144sp}%
\begingroup\makeatletter\ifx\SetFigFont\undefined%
\gdef\SetFigFont#1#2#3#4#5{%
  \reset@font\fontsize{#1}{#2pt}%
  \fontfamily{#3}\fontseries{#4}\fontshape{#5}%
  \selectfont}%
\fi\endgroup%
\begin{picture}(2254,718)(4621,-3671)
\put(5761,-3076){\makebox(0,0)[lb]{\smash{{\SetFigFont{9}{10.8}{\rmdefault}{\mddefault}{\updefault}{\color[rgb]{0,0,1}$\gamma_{b,r}$}%
}}}}
\put(6256,-3616){\makebox(0,0)[lb]{\smash{{\SetFigFont{9}{10.8}{\rmdefault}{\mddefault}{\updefault}{\color[rgb]{0,0,0}$\beta_{2R,T}$}%
}}}}
\put(5176,-3616){\makebox(0,0)[lb]{\smash{{\SetFigFont{9}{10.8}{\rmdefault}{\mddefault}{\updefault}{\color[rgb]{0,0,0}$\beta_{R,T}$}%
}}}}
\put(4636,-3346){\makebox(0,0)[lb]{\smash{{\SetFigFont{9}{10.8}{\rmdefault}{\mddefault}{\updefault}{\color[rgb]{1,0,0}$\gamma_{b,r}$}%
}}}}
\end{picture}%
     \caption{Example of sink-tree.}
     \label{fig:2-2}
   \end{figure}

  Concerning the end-to-end delay of a flow, we show on a simple
  example that our approach leads to tighter delays. Consider the toy
  example with two servers in tandem described in Figure~\ref{fig:2-2}. 
  The worst-case delay bound from~\cite[Theorems~18 and 15]{BS15} is 
  $$D_1 = 2T + \frac{2b+rT}{R}.$$
  
  With our algorithm, we compute $\xi_{2}^2 = \frac{r}{2R-r}$ and
  $\xi_1^2 = \frac{r}{R}$, so the worst-case delay is, from Corollary~\ref{cor:wcd},
$$D_2 = 2T + \frac{b}{R} + \frac{b + rT}{2R-r}.$$
As $R>r$, $D_2 <D_1$. This is quite intuitive, as the cross-traffic
arrives at server 2, and then is served at rate $2R$. 

\paragraph{Arrival curve for the departure processes}
\begin{corollary}
  With the same notations as in Theorem~\ref{th:wcb}, the arrival
  curve of the departure functions from server $n$ for flows in $I$ is
  $\gamma_{B,\sum_{i\in I} r_i}$.
\end{corollary}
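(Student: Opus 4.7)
The claim is that $D_I := \sum_{i \in I} F^{(out)}_i$, the aggregate departure of flows in $I$ from server $n$, is $\gamma_{B, r_I}$-constrained, i.e., $D_I(t) - D_I(s) \le B + r_I(t-s)$ for every admissible trajectory and every $0 \le s \le t$, where $r_I := \sum_{i \in I} r_i$.

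My plan is to combine the backlog bound of Theorem~\ref{th:wcb} with a standard output-process argument. Writing $A^{(n)}_I := \sum_{i \in I} F^{(in),n}_i$ for the aggregate arrival at server $n$ and $b_I := A^{(n)}_I - D_I$ for the backlog, Theorem~\ref{th:wcb} gives $b_I(u) \le B$ for every $u$ and every admissible trajectory. The rate contribution $r_I$ is forced by flow conservation, since over any large window the long-term output rate of flows in $I$ matches their source injection rate.

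For the burstiness, I would start from
$D_I(t) - D_I(s) = (A^{(n)}_I(t) - A^{(n)}_I(s)) - (b_I(t) - b_I(s)),$
use $b_I(t) \geq 0$ and $b_I(s) \leq B$, and then split on the value of $b_I(s)$. When $b_I(s) = 0$ (in particular at any start of a backlogged period for flows in $I$ at server $n$), we have $D_I(s) = A^{(n)}_I(s)$, so $D_I(t) - D_I(s) \leq A^{(n)}_I(t) - A^{(n)}_I(s)$, and the output burst is tied directly to the input burst at server $n$. In the general case $b_I(s) > 0$, I would slide $s$ back to the last zero-backlog instant $s^* \le s$ and use monotonicity of $D_I$ (so that $D_I(t) - D_I(s) \le D_I(t) - D_I(s^*)$) to reduce to the previous case at $s^*$; the excess of $A^{(n)}_I$ over the rate $r_I$ accumulated on $[s^*, s]$ is captured by the backlog $b_I$, which is itself capped at $B$.

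The main obstacle is converting $A^{(n)}_I(t) - A^{(n)}_I(s)$ into an expression of the form $r_I(t-s) + (\text{term absorbed by } B)$. In the single-server case, this is a direct consequence of the deconvolution identity $\gamma_{b_\Sigma, r_I} \oslash \tilde\beta = \gamma_{B, r_I}$, where $\tilde\beta$ is the residual service curve at server $n$ and $B$ coincides with Eq.~\eqref{eq:BI}. In the multi-server tree, the arrival burstiness at server $n$ may exceed $b_\Sigma$ because of upstream delays, but the value of $B$ produced by Algorithm~\ref{algo:wcb} already integrates this cascading effect along the path from sources to server $n$; in particular, the worst-case trajectory constructed in the proof of Theorem~\ref{th:wcb} saturates the bound $B + r_I(t-s)$ and so witnesses that no tighter arrival curve can hold for $D_I$.
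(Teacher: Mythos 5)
There is a genuine gap, and you flag it yourself without closing it. Your decomposition works entirely at server $n$: writing $D_I(t)-D_I(s)=\bigl(A^{(n)}_I(t)-A^{(n)}_I(s)\bigr)-\bigl(b_I(t)-b_I(s)\bigr)$ and using $b_I(s)\le B$, $b_I(t)\ge 0$ can only ever give you $D_I(t)-D_I(s)\le \tilde b_I + r_I(t-s) + B$, where $\tilde b_I$ is the burstiness of the aggregate of $I$-flows \emph{at the input of server $n$} (which exceeds $\sum_{i\in I}b_i$ because of upstream multiplexing). Turning $A^{(n)}_I(t)-A^{(n)}_I(s)$ into ``$r_I(t-s)$ plus a term absorbed by $B$'' is precisely the content of the corollary, and your proposal does not supply it: the closing appeal to the worst-case trajectory of Theorem~\ref{th:wcb} ``saturating'' $B+r_I(t-s)$ only shows that no \emph{tighter} arrival curve can hold (a lower-bound/tightness statement); it says nothing about the validity of the upper bound you are trying to prove. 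The intermediate step of sliding $s$ back to the last zero-backlog instant $s^*$ also does not close, because after $D_I(t)-D_I(s)\le D_I(t)-D_I(s^*)$ you would need a bound in terms of $t-s$, not $t-s^*$, and the claim that the excess over rate $r_I$ accumulated on $[s^*,s]$ is ``captured by the backlog'' is exactly the unproved point again.

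The paper proves the corollary by a different and more global argument, Lemma~\ref{lem:wcbI}, which never looks at the (inflated) arrivals at server $n$. Fix $s<t$ and modify the trajectory by replacing the \emph{source} arrivals of the flows in $I$ from time $s$ onward by their maximal $\gamma$-compatible version (full burst at time $s$, then rate $r_i$); the modified trajectory is still admissible, so the worst-case bound $B$ of Theorem~\ref{th:wcb} applies to it, giving $F'(s)-F^{(out)}(s)\le B$ where $F'$ is the maximal aggregate input. Since the true input is dominated by the maximal one, $F^{(out)}(t)\le F^{(in)}(t)\le F'(s)+\bigl(\sum_{i\in I}r_i\bigr)(t-s)$, hence $F^{(out)}(t)-F^{(out)}(s)\le B+\bigl(\sum_{i\in I}r_i\bigr)(t-s)$. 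The point you are missing is that the burst is absorbed into $B$ by making it arrive \emph{at time $s$ in a comparison trajectory} to which the worst-case backlog bound applies, rather than by trying to estimate the actual arrival increment at server $n$; this is the standard ``maximal re-injection'' trick, and it is what lets the output curve be $\gamma_{B,\sum_{i\in I}r_i}$ rather than $\gamma_{B+\tilde b_I,\sum_{i\in I}r_i}$. If you want to rescue your server-$n$-local approach, you would have to prove an analogous statement relating the quantity $\sup_{s\le t}\bigl[A^{(n)}_I(t)-D_I(s)-\bigl(\sum_{i\in I}r_i\bigr)(t-s)\bigr]$ to $B$, which again amounts to the lemma's trajectory-modification argument.
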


This is a direct consequence of Theorem~\ref{th:wcb} and of Lemma~\ref{lem:wcbI}.

\begin{lemma}
  \label{lem:wcbI}
  Consider a system and flows crossing that system being globally constrained by the arrival curve
  $\gamma_{b,r}$. If the maximal backlog of these flows in this
  system is less than $B$, then the arrival curve for the departure
  process of these flows is constrained by $\gamma_{B,r}$. 
\end{lemma}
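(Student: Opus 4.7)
The plan is to invoke the standard min-plus network-calculus output-curve theorem and specialize to the affine leaky-bucket arrival curve. Let $A$ and $D$ denote the aggregate cumulative arrival and departure of the flows of interest; by hypothesis $A$ is $\gamma_{b,r}$-constrained and the system offers some (simple) service curve $\beta$ to this aggregate, so that $D \geq A * \beta$ and $D \leq A$. For $s \leq t$ these two inequalities give
\[ D(t) - D(s) \;\leq\; A(t) - (A * \beta)(s) \;=\; \sup_{u \leq s}\bigl[A(t) - A(u) - \beta(s-u)\bigr] \;\leq\; (\gamma_{b,r} \oslash \beta)(t-s), \]
where in the last step one bounds $A(t)-A(u)\leq \gamma_{b,r}(t-u)$ and changes variable $v=s-u$.

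The specialization to the leaky-bucket curve rests on the affine identity $\gamma_{b,r}(v+\tau) = \gamma_{b,r}(v) + r\tau$ valid for $v,\tau>0$. Substituting into the deconvolution yields, for $t-s>0$,
\[ (\gamma_{b,r} \oslash \beta)(t-s) \;=\; r(t-s) + \sup_{v \geq 0}\bigl[\gamma_{b,r}(v) - \beta(v)\bigr]. \]
The residual supremum is exactly the classical worst-case backlog bound $b_{\max}(\gamma_{b,r},\beta)$, which by hypothesis is at most $B$. Hence $D(t)-D(s) \leq B + r(t-s) = \gamma_{B,r}(t-s)$, which is the desired arrival-curve inequality.

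The main subtle point is the $v=0$ boundary term in the deconvolution: because $\gamma_{b,r}$ has a jump of $b$ at the origin, the super-additive identity fails at $v=0$, but the boundary contribution $\gamma_{b,r}(t-s)-\beta(0)=b+r(t-s)$ is dominated by the supremum over $v>0$ as soon as $B\geq b$, which always holds because the worst-case backlog must absorb the burst of the arrival curve. Once this bookkeeping is handled the remainder is a one-line computation powered entirely by the super-additivity of the leaky-bucket curve, and the lemma follows.
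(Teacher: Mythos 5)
Your proof does not establish the lemma under its stated hypothesis, and the flaw is concentrated in one step. You introduce a simple service curve $\beta$ offered to the aggregate of the flows of interest and then assert that $\sup_{v\geq 0}\bigl[\gamma_{b,r}(v)-\beta(v)\bigr]$ ``is exactly the classical worst-case backlog bound, which by hypothesis is at most $B$.'' The hypothesis, however, bounds the \emph{actual} worst-case backlog of these flows, i.e.\ $\sup_t\,[F^{(in)}(t)-F^{(out)}(t)]$ over admissible trajectories. That quantity is always $\leq \sup_v[\gamma_{b,r}(v)-\beta(v)]$ for any valid $\beta$, but the implication you need goes the other way: knowing the true backlog is $\leq B$ does not give $\sup_v[\gamma_{b,r}(v)-\beta(v)]\leq B$. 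Moreover, no service curve for the aggregate of the flows of interest is part of the lemma: in the paper's application the ``system'' is a tree network shared with cross-traffic, and the only residual curves one could manufacture (e.g.\ $(\beta^{(j)}-\sum_{i\notin I}\alpha_i)_+$ compositions) yield a vertical deviation that is in general strictly larger than the tight $B$ produced by Algorithm~\ref{algo:wcb}. So your argument would at best prove the lemma with $B$ replaced by a looser, service-curve-based burstiness, which is precisely what the tree analysis is designed to avoid; the $v=0$ boundary discussion is a side issue by comparison.

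The lemma is instead proved directly from the backlog hypothesis by a trajectory argument, with no service curve for the aggregate. Fix $s<t$ and an admissible trajectory; leave everything unchanged up to time $s$, replace the arrivals of the flows of interest after $s$ by their maximal $\gamma_{b,r}$-compliant continuation (a burst at $s$, then rate $r$), and keep the departures frozen at time $s$, which is still admissible since strict service curves only constrain backlogged periods. Applying the hypothesis to this modified trajectory just after $s$ gives $F'(s)-F^{(out)}(s)\leq B$, where $F'$ denotes the modified input including the burst; and since the true input satisfies $F^{(in)}(t)\leq F'(s)+r(t-s)$, causality yields $F^{(out)}(t)-F^{(out)}(s)\leq F^{(in)}(t)-F^{(out)}(s)\leq B+r(t-s)$, i.e.\ the departures are $\gamma_{B,r}$-constrained. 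This uses only what the lemma assumes, and in particular remains valid for the tight $B$ of Theorem~\ref{th:wcb}.
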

\begin{proof}
Let $F^{(in)}$ be the sum of the arrival processes of the
  flows of interest and $F^{(out)}$ the sum of departure processes. Fix $s<t$, and transform $F^{(in)}$ from time $s$: for each flow of interest, the arrival process becomes maximal:
  a burst arrives at time $s$, and then data arrival at rate
  $r$. Call $F'$ this process. We have $F'(s) - F^{(out)}(s)\leq B$, by hypothesis, and $F^{(out)}(t) \leq F^{(in)}(t)
  \leq F'(s)+r(t-s)$. So $F^{(out)}(t)  - F^{(out)}(s) \leq F'(s) + r(t-s) - F^{(out)}(s) \leq B+r(t-s).$
\end{proof}

\section{Stability and performance bounds in cyclic networks}
\label{sec:stability}

In this section, we combine the results of the two previous
sections. We first restrict to the linear model, and in the last
paragraph, we show how those results could be extended to more general
cases.

\subsection{One-stage optimization problem}
We first investigate the one-stage optimization problem of
Equation~\eqref{1stage}. Given a network, several transformations are
possible, and we give here three of them. Due to the linear model, the
optimization problem boils down to a linear program: there exist a
non-negative matrix $M\in \R^{L,L}$, a non-negative column-vector
$N\in \R^L$, a non-negative line-vector $Q \in \R^L$ and a
non-negative constant $C$ such that performance $P$ can be compute as
\begin{equation}\label{1stage-lp}\begin{array}{|l|}
  \hline
  \text{Maximize }Q\bb + C \text{ such that }$${\mathbf{b}} \leq M{\bb} + N$$.\\
  \hline
\end{array}
\end{equation}

A stability condition is given by the following theorem:

 \begin{theorem}
    \label{th:TD}
    If the spectral radius of $M$ is strictly less than 1, then $\cN$ is stable.
\end{theorem}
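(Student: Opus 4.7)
The approach is to apply Theorem~\ref{th:stop}: global stability of $\cN$ follows as soon as the set $\mathcal{C} = \{\bb : \bb \leq M\bb + N\}$ has a finite supremum. So the goal reduces to showing that, under $\rho(M) < 1$, every feasible $\bb$ is componentwise bounded by a fixed finite vector.

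First I would rewrite the feasibility constraint as $(I-M)\bb \leq N$. Because $\rho(M) < 1$, the Neumann series $\sum_{k \geq 0} M^k$ converges and sums to $(I-M)^{-1}$. Since $M$ has non-negative entries by assumption, each power $M^k$ is also non-negative, hence so is $(I-M)^{-1}$ as a sum of non-negative matrices. Multiplying the vector inequality $(I-M)\bb \leq N$ on the left by this non-negative matrix preserves the componentwise order, giving $\bb \leq (I-M)^{-1} N$. The right-hand side does not depend on $\bb$, so $\sup \mathcal{C} \leq (I-M)^{-1}N$ is finite, and Theorem~\ref{th:stop} then yields global stability of $\cN$.

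The main (and rather mild) subtlety is preservation of the inequality under left multiplication by $(I-M)^{-1}$: this step is not automatic for an arbitrary invertible matrix, and relies on the fact that $(I-M)^{-1}$ is entrywise non-negative, which is exactly what the Neumann-series representation provides. Everything else is a short linear-algebra manipulation on top of Theorem~\ref{th:stop}.
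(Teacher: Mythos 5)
Your proof is correct and follows exactly the route the paper intends (the paper leaves this proof implicit): under $\rho(M)<1$ and $M\geq 0$, the Neumann series gives $(I-M)^{-1}=\sum_{k\geq 0}M^k\geq 0$, so any $\bb$ with $\bb\leq M\bb+N$ satisfies $\bb\leq (I-M)^{-1}N$, the feasible set of Equation~\eqref{1stage-lp} is bounded by a finite vector, and Theorem~\ref{th:stop} yields global stability. Your remark that the order-preservation step needs entrywise non-negativity of $(I-M)^{-1}$ is exactly the right point to isolate; an equivalent variant is to iterate $\bb\leq M^k\bb+\sum_{j<k}M^jN$ and let $k\to\infty$.
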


As a consequence, depending on the decomposition we will obtain
different stability conditions.  In the following, we only explicit
the construction of $M$ and $N$. Vector $Q$ and constant $C$ can be
computed by similar methods.

\subsubsection{Server decomposition}

In the literature, the most usual decomposition is into
elementary servers ($\A^\r = \A$) and elementary flows (no
grouping). With our notation,
$S = \{(i,k)~|~i\in\N_m, 1\leq k\leq \ell_i\}$ and
$\pi_{i,k} = \langle \pi_i(k)\rangle$, and
$L = |S| = \sum_{i\in\N_n} \ell_i$. The decomposition of $S$ is into
singletons, and if $S^\ell = \{(i,k)\}$, we simply denote by
$\alpha_{i,k}$ the smallest arrival curve of $f_{i,k}$.

From classical results (see \cite[Sec. 6.3.2]{LT2001} for example), for all
$k<\ell_i$, 
\begin{equation}
\label{eq:sfa}
\alpha_{i,k+1} \leq \alpha_{i,k}\deconv (\beta_j - \sum_{s\in S_j\setminus \{(i,k)\}} \alpha_s)_+,
\end{equation}
and $\alpha_{i,1} = \alpha_i$, which  gives for leaky-bucket and rate-latency curves, and under stability assumption, that $\alpha_{i,k} =  \gamma_{b_{i,k},r_i}$ and from  Equation~\eqref{eq:sfa},
$$b_{i,k+1} \leq b_{i,k} + \frac{r_i}{R_j - \sum_{p\in \fl(j)\setminus\{i\}}r_p} (\sum_{s\in S_j\setminus \{(i,k)\}} \!\!\!\!\!b_s + R_jT_j).$$

Parameters $T_j$, $R_j$ and $r_i$ are fixed and 
$b_s$ are variables, this equation gives the coefficients of $M$ and $N$, that we denote $M_\SD$ and $N_{\SD}$ in the following. 

\subsubsection{Tree decomposition}

In this paragraph, we use the decomposition into a tree instead
of decomposing the network into elementary servers.

Suppose that arcs $\A^\r$ have been removed such that the remaining
network is a tree networks. In this case, as a tree has exactly $n-1$
arcs, so
$L = |S| \leq \sum_{i\in\N_n} \ell_i -n+1$.

Consider $a=(j_1,j_2)\in\A^\r$ and $f_{i,k}$ a flow such
that $j_1=\pi_{i,k}(\ell_{i,k})$ and $j_2=\pi_{i,k+1}(1)$.
An arrival curve for flow $f_{i,k+1}$ can be computed from the
others: from Lemma~\ref{lem:wcbI}, an arrival curve for flow $f_{i,k+1}$
is $\gamma_{b_{i,k+1},r_i}$ where $b_{i,k+1}$ is the maximum
backlog for flow $i$ at server $j_1$ computed with
Algorithm~\ref{algo:wcb}. As $b_{i,k+1}$ is linear in the bursts of
the other flows, there exists $(\varphi_s^{i,k+1})_{s\in S}$ and $(\rho_j^{i,k+1})_{j=1}^n$  such that 
$$b_{i,k+1} \leq \sum_{s\in S} \varphi_s^{i,k+1} b_s + \sum_{\{j|j\path j_1\}} \rho_j^{i,k+1}T_j,$$
where the exponent $i,k+1$ emphasizes the fact that the backlog 
computed is the burst parameter of
$\alpha_{i,k+1}$. 

As a consequence, with $M_{\TD}$ and $N_{\TD}$ playing the role of $M$
and $N$ above, we have $(M_\TD)_{s,s'} = \varphi_{s'}^{s}$ and
$(N_\TD)_s = \sum_{\{j|j\path j_1\}} \rho_j^{s}T_j$.

\begin{example}
  Consider the tree decomposition of Figure~\ref{fig:toy-unfold} is
  obtained. 

To find the other equations, we apply Algorithm~\ref{algo:wcb} and with the notations above, 
  \begin{equation}
    \label{eq:toy-round1}
    \left\{\begin{array}{l}
      b_{1,2} =  b_{1,1} + \varphi_{2,1}^{1,2} b_{2,1} + \varphi^{1,2}_{1,2} b_{1,2} + Cst_1\\
      b_{2,2} =  b_{2,1} + \varphi_{1,1}^{2,2} b_{1,1} + \varphi^{1,2}_{1,2} b_{1,2} + Cst_2\\
      b_{3,2} =  b_{3,2} + \varphi_{1,2}^{3,2}(b_{1,2} + b_{2,2} + b_{4,1}) + \rho_2^{3,2}T_2.
    \end{array}
    \right.
  \end{equation}
  Note that the expression of $b_{3,2}$ only depends on the behavior
  of server 2. Indeed, Algorithm~\ref{algo:wcb} only explores a server 
  and its descendants, but server 2 has none. Also, note that from
  Algorithm~\ref{algo:wcb}, two flows following the same path have the
  same linearity coefficient: $\varphi_{1,2}^{3,2} =
  \varphi_{2,2}^{3,2} = \varphi_{4,1}^{3,2}$.
\end{example}

As the backlog bounds computed with Algorithm~\ref{algo:wcb} are tight, the stability condition  with matrix $M_{\TD}$ is better than that with matrix $M_{\SD}$. 

\subsubsection{Arc grouping}
\label{sec:ag}
Despite the fact that Algorithm~\ref{algo:wcb} computes the worst-case
backlog bound for each flow, $M_\TD$ having spectral radius less than
one is only a sufficient condition for the network stability. Indeed,
the linear system is obtained by running Algorithm~\ref{algo:wcb}
independently $|S|$ times, but worst-case bounds for flows $s$ and $s'$
ending at the same server do not happen at the same time. Indeed,
consider two flows with respective arrival curve $\gamma_{b_1,r_1}$
and $\gamma_{b_2,r_2}$ crossing a server offering a strict service
curve $\beta_{R,T}$. Then the worst-case delay for flow 1 is $B_1 =
b_1 + \frac{r_1}{R-r_2}(b_2 + RT)$, for flow 2 is $B_2=b_2 +
\frac{r_2}{R-r_1}(b_1 + RT)$ and the worst-case backlog in the server
is $B = b_1 + b_2 + (r_1+r_2)T$. Obviously, $B < B_1+B_2$.

In this paragraph, our strategy is to group flows according to the removed arcs.  

Suppose that the network is stable and denote by $B_a$ the worst-case
backlog at arc $a=(j_1,j_2)\in \A^\r$, that is the maximal backlog at
server $j_1$ of flows having $\langle j_1,j_2\rangle$ as a sub-path. We denote $S_a
= \{(i,k)\in S~|~\pi_{i,k}(\ell_{i,k}) = j_1 \text{ and }
\pi_{i,k+1}(1) = j_2\}$ and $S'_a = \{(i,k+1)\in
S~|~\pi_{i,k}(\ell_{i,k}) = j_1 \text{ and } \pi_{i,k+1}(1) = j_2\}$.

With Algorithm~\ref{algo:wcb}, one can compute an upper of bound
$B_a$ for each $a\in\A^\r$: $\exists \varphi^a_{s}$ and $\rho_j^a$ such
that
$$
B_a \leq  \sum_{s\in S} \varphi_{s}^a b_{s}+  \sum_{\{j|j\path j_1\}} \rho^a_j T_j.
$$

The next step is to refine this equation so that $(B_{a})_{a\in\A^\r}$ appear in
the right-hand term instead of $(b_{s})_{s\in S}$. We know from the proof of
Theorem~\ref{th:wcb} that the worst-case backlog is maximized when the
cross-traffic is maximal.  Consider arc $a' =
(j'_1,j'_2)\in\A^\r$. For all $s\in S'_{a'}$, the arrivals of $f_s$
will all be maximized from time $t_{j'_2}$. At this time, the backlog
in server $j'_1$ is at most $B_{a'}$ and the backlog of each flow
transmitted to server $j'_2$ is $x_s$ with $\sum_{s\in S'_{a'}} x_s
\leq B_{a'}$. From time $t_{j'_2}$ on, data of flow $f_s$
necessarily arrives at rate $r_s$: if it could arrive faster, the
backlog would not have been maximized.

As a consequence, for all $a'=(j'_1,j'_2)\in\A^\r$, if $B_{a'}$ is the
worst-case backlog at server $j'_s$, for flows $f_s$, $s\in S'_{a'}$,
there exists $(x_s)_{s\in S'_a}$ such that $\sum_{s\in S'_{a'}}x_s
\leq B_{a'}$ and
\begin{align*}
  B_a  \leq & \sum_{s\in S} \varphi_{s}^a x_{s}+  \sum_{\{j|j\path j_1\}} \rho^a_j T_j \\
   \leq & \sum_{a'\in \A'} \Big[\Big(\max_{s\in S'_{a'}}\varphi_{s}^a\Big) \Big(\sum_{s\in S_a'}x_{s}\Big) \Big]+ \sum_{i=1}^m \varphi_{(i,1)}^a b_i+ \sum_{\{j|j \path j_1\}} \rho^a_j T_j \\ 
   \leq & \sum_{a'\in \A'}\Big(\max_{s\in S'_{a'}}\varphi_{s}^a\Big)B_{a'} +  \sum_{i=1}^m \varphi_{(i,1)}^a b_i+\sum_{\{j|j \path j_1\}} \rho^a_j T_j.
\end{align*}
As a consequence, with $M_{\TDA}$ and $N_{\TDA}$ playing the role of $M$
and $N$ above, we have $(M_{\TDA})_{a,a'} = \max_{s\in S_{a'}}\varphi_{s}^a$ and
$(N_{\TDA})_{a} = \sum_{i=1}^m \varphi_{(i,1)}^a b_i+\sum_{\{j|j \path j_1\}} \rho^a_j T_j$.

\begin{example}
  We compute backlog bounds for arcs $a_1=(4,2)$ and $a_2=(2,1)$ and obtain. 
\begin{equation*}
    \label{eq:toy-round2}
    \left\{\begin{array}{ll}
      B_{a_1} & \leq  N_{\TDA}^{a_1} +  (\varphi_{1,2}^{a_1} \lor \varphi_{2,2}^{a_1})B_{a_1} + \varphi_{3,2}B_{a_2} \\
      B_{a_2} &\leq  N_{\TDA}^{a_2} + \varphi_{1,2}^{a_2}B_{a_1}.
    \end{array}
    \right.
  \end{equation*}
  A sufficient condition for the stability is then given by $(\varphi_{1,2}^{a_1} \lor \varphi_{2,2}^{a_1})+ \varphi_{3,2} \varphi_{1,2}^{a_2}< 1$. 
\end{example}

It is not possible to compare the stability bound with $M^\TDA$ with
$M^\TD$ or $M^\SD$: there are examples where the stability bound will
be better, for the unidirectional ring for example, and some examples
where it will be worse, like for the bidirectional ring. In the next
section, we present those two examples that illustrate the advantages
and limits of this latter approach.

\subsection{Examples}
\subsubsection{Stability of the unidirectional ring} 
\label{sec:1ring}
Consider a ring with $n$
nodes. Its induced graph is $\cG$ with
$\A = \{(i,i+1), i\leq n-1\}\cup \{(n,1)\}$. The transformation into a
tree gives a tandem networks by removing arc $(n,1)$. Flows are decomposed in either one flow or two flows. Grouping flows
that cross this arc enables to show the stability of the
unidirectional ring. 

\begin{theorem}
  \label{th:ring}
  The unidirectional ring is stable under local stability condition.
\end{theorem}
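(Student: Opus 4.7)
The plan is to instantiate the arc-grouping framework of Section~\ref{sec:ag} with the single cut $\A^{\r}=\{(n,1)\}$. Removing this arc turns the ring into a tandem $1\to 2\to\cdots\to n$. Every wrap-around flow (one whose original path satisfies $i_1>i_2$) is split into a first segment $f_{i,1}$ from $i_1$ to $n$ and a second segment $f_{i,2}$ from $1$ to $i_2$; non-wrapping flows remain intact. Since there is a single removed arc $a=(n,1)$, the matrix $M_{\TDA}$ from Section~\ref{sec:ag} collapses to the scalar $\varphi = \max_{s\in S'_a}\varphi_s^a$, and Theorem~\ref{th:TD} reduces the claim to $\varphi<1$.

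Each $\varphi_s^a$, with $s=f_{i,2}$ of path $\langle 1,\ldots,i_2\rangle$, equals $\xi_1^{i_2}$, produced by running Algorithm~\ref{algo:wcb} on the tandem with flows of interest $I=S_a$ (these all end at server~$n$). The heart of the proof is a backward induction on $j=n,n-1,\ldots,1$ showing that every coefficient $\xi_j^k$ computed by the algorithm satisfies $\xi_j^k<1$. The base case is immediate: $\xi_n^n = r^*_n/(R_n - r_n^n)$, and local stability at $n$, i.e.\ $R_n > r^*_n+\sum_k r_n^k$, gives $R_n - r_n^n > r^*_n$, hence $\xi_n^n<1$.

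For the induction step at server $j<n$, the algorithm (lines~6--12) either copies $\xi_j^k\gets\xi_{j^{\bu}}^k$, which is $<1$ by the induction hypothesis, or sets
$$\xi_j^k \;=\; \frac{r_j^* + \sum_{\ell\in k^{\bu}\path n}\xi_{j^{\bu}}^{\ell}\,r_j^{\ell}}{R_j - \sum_{\ell\in j\path k} r_j^{\ell}}.$$
Applying $\xi_{j^{\bu}}^{\ell}<1$ from the induction hypothesis, the numerator is bounded by $r_j^* + \sum_{\ell\in k^{\bu}\path n} r_j^{\ell}$; adding $\sum_{\ell\in j\path k} r_j^{\ell}$ to both sides, the inequality $\xi_j^k<1$ becomes $r_j^* + \sum_{\ell=j}^{n} r_j^{\ell} < R_j$, which is exactly local stability at server~$j$. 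Taking the maximum over $s\in S'_a$ yields $\varphi<1$, and Theorem~\ref{th:TD} delivers global stability.

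The main obstacle is making sure the inductive bookkeeping of Algorithm~\ref{algo:wcb} is handled cleanly. The algorithm's while loop splits destinations $k$ into two regimes (copy from predecessor vs.\ fresh computation), and the $r_j^{\ell}$ terms are partitioned between numerator (destinations $\ell>k$) and denominator (destinations $\ell\le k$); one must verify that in both branches the strict inequality propagates. Once the accounting is written out, the induction collapses to a one-line rearrangement yielding the local-stability inequality at $j$, and strictness comes either from $\xi_{j^{\bu}}^{\ell}<1$ (when some $r_j^{\ell}>0$ for $\ell>k$) or directly from local stability (otherwise).
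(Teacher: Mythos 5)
Your proposal is correct and follows essentially the same route as the paper's own proof: cut the single arc $(n,1)$, use the arc-grouping matrix $M_{\TDA}$ (which collapses to the scalar $\max_{s}\varphi^a_s$), and show by backward induction on the servers that every coefficient $\xi_j^k$ produced by Algorithm~\ref{algo:wcb} is strictly less than $1$ under local stability. The only differences are cosmetic (you index the maximum over $S'_a$, fixing a small notational slip in the paper, and you spell out the rearrangement to the local-stability inequality), so nothing further is needed.
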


\begin{proof}
  We consider matrix $M_{\TDA}$ and take $\A^\r=\{(n,1)\}$. With $I$ the set of flows that circulate through arc $(n,1)$, 
  $S_{(n,1)} = \{(i,2)~| i\in I\}$.  When computing the
  worst-case backlog at arc $a$, the flows of interests are flows
  $f_{(i,1)}$ for $i \in I$ and
  $$B_a \leq
  \max_{s\in S_a}{\varphi_s^a} B_a + C,$$
  where $C$ is a constant there is no need to explicit in this proof. So it remains to
  show that for all $s\in S_a$, $\varphi^a_s<1$. As $(i,2)\in S_a$ is
  not a flow of interest, $\varphi_{(i,2)} = \xi_{1}^{\pi_i(\ell_i)}$. Observe from Algorithm~\ref{algo:wcb} how $\xi_j^{\ell}$ are
  computed: because of the local stability, $R_n >r_n^n + r_n^*$, so
  $\xi_n^n <1$. Now assume that $\xi_{j^{\bu}}^k <1$ (lines
  7-11). Either $\xi_j^k = \xi_{j^{\bu}}^k <1$, or
  $\xi_j^\ell =(r^*_j + \sum_{\ell'\in k^{\bu} \path n}
  \xi_{j^\bu}^{\ell'} r_j^{\ell'})/ (R_j-\sum_{\ell\in j \path
    k}r_j^{\ell'}) \leq (r^*_j + \sum_{\ell'\in k^{\bu} \path n}
  r_j^{\ell'})/ (R_j-\sum_{\ell\in j \path k}r_j^{\ell'})<1$,
  as from local stability condition.
  As a consequence, for all $j$ and $\ell$, $\xi_j^\ell <1$ and
  $\max_{s\in S_a}{\varphi_s^a}<1$, and 
  $B_a \leq C(1-\max_{s\in S_a}{\varphi_s^a})^{-1}$,  ensuring
  the stability of the network.
\end{proof}

This result has already been proved under stronger assumptions:
in~\cite{TG1996} when servers are constant-rate servers and
in~\cite{LT2001} when servers have a maximal service rate. Our method
is not specific to the ring topology, so we can hope to improve the
stability conditions for more general topologies. 

\subsubsection{The bi-directional ring}
\label{sec:bidir}
An example where grouping according to the arcs is not be efficient is
the bi-directional ring with $n$ servers.  Suppose the network
is crossed by $2n$ flows of length $n$: $\pi_1 = \langle
1,2,\ldots,n\rangle$, $\pi_{n+1} = \langle n,n-1,\ldots,1\rangle$,
$\pi_i = \langle i,i+1,\ldots,n,1\ldots,i-1\rangle$ and $\pi_{n+i}
=\langle i,i-1,\ldots,1,n\ldots,i+1\rangle$ for $i = 2,\ldots,n$.  The
tree decomposition is obtained by keeping arcs $\{(i,i+1), i\leq
n-1\}$ and the path obtained after the decomposition are the one
obtained for the unidirectional ring for $f_1,\ldots,f_n$, and flows
of length 1 for the other paths.

With this decomposition, we can never ensure stability: let us look at
the coefficient $\varphi_{a'}^a$ that are computed. Consider arc
$a=(2,1)$ for example. Among the flows of interest are the flows
$f_{(i,k)}^a$ of path $\langle 2\rangle$, with $k\neq 1$, so
$\varphi_{(i,k)}^a=1$. This means that $(M_{\TDA})_{(2,1),(3,2)}=1$,
and the similarly, $(M_{\TDA})_{(j,j-1),(j+1,j)}=1$ and
$(M_{\TDA})_{(1,n),(2,1)}=1$. There is a cycle of coefficients 1 in
the matrix: the spectral radius of $M_\TDA$ is at least 1.

More generally, grouping according to the arcs will never ensure the
stability if in matrix $M_\TDA$ it is possible to find a cycle with
weights one on all its arcs. As a consequence, intermediate solution
between no grouping of flows and grouping among the arcs might lead to
better solutions. For example, in the case of the bi-directional ring,
a better solution would be to group flows for the removed arc $(n,1)$ only,
and not group the other flows.

\subsection{Two-stage optimization problem}

We have seen in through the examples of the previous paragraph that
different stability conditions and performance bounds can be found,
depending on how the network is decomposed. Following the approach of
Equation~\eqref{2stage}, it is possible to combine the optimization
problems: 

$$
\begin{array}{|rl|}
 \hline
  \text{ Maximize }&Q\bb' + C \\ \text{ such that }&
  \mathbf{b}'\leq \mathbf{b}, \sum_{s\in S_a} b'_s \leq B_a \\ 
  & {\mathbf{b}} \leq M_\TD{\bb} + N_\TD,~\mathbf{B} \leq M_\TDA \mathbf{B} + N_\TDA.\\
  \hline 
\end{array}
$$
This formulation slightly differs from the one in Equation~\eqref{2stage}:
constraint `$b'_s \leq B_a\quad \forall s\in S_a$'' has been replaced
by ``$\sum_{s\in S_a} \mathbf{b}'_s \leq B_a$''. Indeed, by a
reasoning similar to that of Paragraph~\ref{sec:ag}, we can fix
$a = (i,j)\in \A^r$. If the worst-case backlog at server $i$ for
flows crossing $a$ is $B_a$ this means that when this worst-case
happens, there is no data of these flows in the rest of the network
(which would deny the maximality of $B_a$).  Consider a flow $f_s$,
$s\in S_a$. Its amount of data in arc $a$ is $x_s$, and its arrival
rate $r_s$. Data cannot arrive faster than $r_s$, so from the time of
worst-case backlog, flow $f_s$ is $\gamma_{x_s,r_s}$-constrained.

\subsection{General arrival and service curves}
Beyond the linear model, it should be possible to obtain tighter
bounds by using more general arrival and service curves. For example,
stair-case functions, or piece-wise linear arrival curves and service
curves.

A first remark is that the stability conditions given here only depend
on the arrival and service rates, then, in the case it is possible to
refine these results to more general curves (as it is for the SD
method), no better stability condition can be inferred. Indeed, a
general curve can usually be lower and upper-bounded by two token-bucket
curves, inducing a lower and an upper-bound of the network by two
linear models with the same stability condition.

The second remark is that it would still be possible to improve the
performance bounds. To our knowledge, there is no evidence in the
literature that the equation $\ba = H(\ba)$ has a unique
fix-point in the general case. One safe solution is to compute the
greatest fix-point by iterations methods. The first step of this
approach is to find an upper bound of that greatest fix-point. This
can be done by computing the fix point of that equation in the linear
model (by bounding the arrival and service curves by linear curves),
and the second step is to iterate from that point for refining the
performance bounds. At each iteration, the performance bound obtained
is an upper bound of the performance of the network.

\section{Comparison and numerical experiments}
\label{sec:numerical}
The different approaches have been implemented in Python and run on a
basic laptop. We will not comment on the computational time as all
those algorithms are polynomial, and the number of constraints of our
linear programs are linear in the size of the networks.

We call $\SD$ the server decomposition method,
$\TD$ the tree decomposition method, $\TDA$ the arc grouping method and $\TDC$ 2-stage method.

We compare those methods on three examples: the unidirectional ring,
the bidirectional ring and a 3-ring network.

In the experiments we assume that the utilization rate of the
network is $U = \min_{j=1}^n U_j$.
flows have uniform parameters:
$b_i = 1kb$, $r_i = 1kb.s^{-1}$ for all $i\in\N_m$, and $T_j = 10ms$ for all
$j\in\N_n$. Only the service rate will vary in function of an
utilization rate: the utilization rate of server $j$ is $U_j =
\frac{\sum_{i\in\fl(j)} r_i}{R_j}$, and

\subsection{The unidirectional ring}
We now consider the example of the unidirectional ring described in Section~\ref{sec:1ring}
with $n=10$.

Figure~\ref{fig:num-ring-uni} shows the backlog guarantee at server
$n$ of flow $1$ for uniform traffic: servers all have the same service
rate $R = 10/U kbs^{-1}$.  The stability conditions are $U<0.18$ for
$\SD$ and $U<0.62$ for $\SD$, so our methods greatly improves the
stability region.  We notice that $\TDA$ is better than $\TD$, so
$\TDA$ and $\TDC$ compute the same bounds.

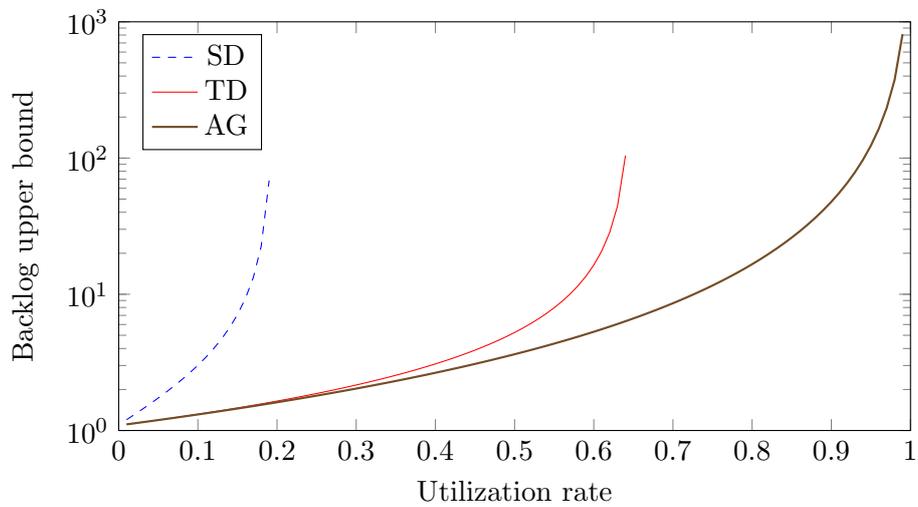
\begin{figure}[htbp]
	\centering
\begin{tikzpicture}
  \begin{semilogyaxis}[height = 7cm, width = 12cm, xlabel={Utilization rate}, ylabel = {Backlog upper bound},legend entries={$\SD$, $\TD$, $\TDA$}, legend style = {at = {(0.03,0.97)}, anchor = north west},ymin = 1,ymax = 1000,xmin=0, xmax = 1]
    \addplot+[mark = none,dashed] table[x index = 0,y index= 1] {cycleunif.txt};
    \addplot+[mark = none] table[x index = 0,y index= 2] {cycleunif.txt};
    \addplot+[mark = none, thick] table[x index = 0,y index= 3] {cycleunif.txt};
   \end{semilogyaxis}
\end{tikzpicture}
 \caption{Backlog bound for the unidirectional ring and
    uniform servers.}
  \label{fig:num-ring-uni}
\end{figure}

Fig.~\ref{fig:stab-ring} (left) shows the ratio between the stability bounds with $\SD$ and with $\TD$ as the number of servers increases on the ring. The ratio grows linearly. 
\begin{figure}
	\centering
\begin{tikzpicture}
  \begin{axis}[enlarge x limits = false,height = 7cm, width = 6cm, xlabel={Number of servers}, ylabel = {Ratio of stability bounds},ymin = 1,ymax = 10,xmin=3, xmax = 30]
    \addplot+[mark = none] table[x index = 0,y index= 3] {stabilitybis.txt};
   \end{axis}
 \end{tikzpicture}
 \begin{tikzpicture}
\begin{axis}[height = 7cm, width = 6cm, xlabel={Number of servers}, ymin = 1,ymax = 1.5,xmin=3, xmax = 30]
   \addplot+[mark = none] table[x index = 0,y index= 3] {stabilityter.txt};
   \end{axis}
\end{tikzpicture}
 \caption{Ratio of stability bounds on the ring with the number of servers grows. Left: unidirectional ring; right: bidirectional ring.}
  \label{fig:stab-ring}
\end{figure}
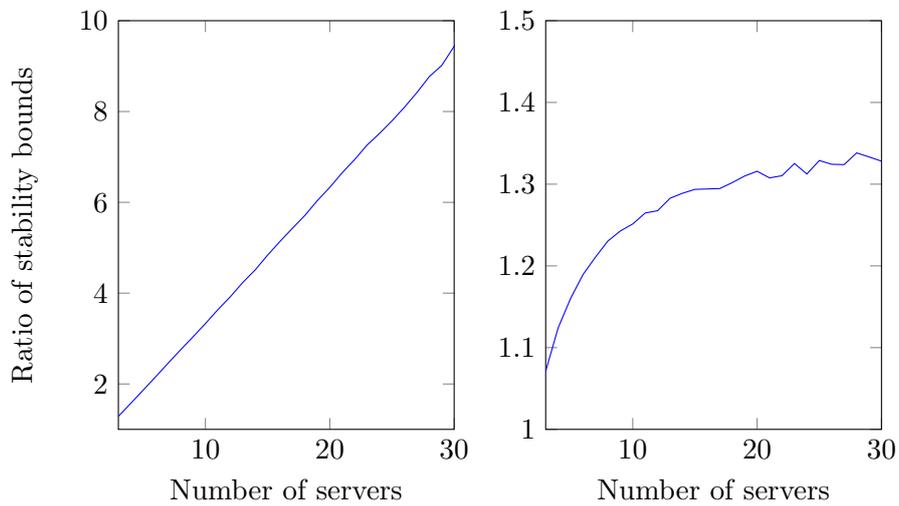

Fig~\ref{fig:num-ring-hete} shows the  backlog guarantee of
flow 1 at server $n$ when the servers have different service rates:
every service rate is $R = 20/U kbs^{-1}$, except $R_9 = R_{10} =10/U
kbs^{-1}$. 

\begin{figure}[htbp]
\centering	
\begin{tikzpicture}
  \begin{semilogyaxis}[ height = 7cm, width = 12cm, xlabel={Utilization rate}, ylabel = {Backlog upper bound},legend entries={$\SD$, $\TD$, $\TDA$,$\TDC$}, legend style = {at = {(0.03,0.97)}, anchor = north west},ymin = 1,ymax = 1000,xmin=0, xmax = 1]
    \addplot+[mark = none,dashed] table[x index = 0,y index= 1] {cyclehete.txt};
    \addplot+[mark = none] table[x index = 0,y index= 2] {cyclehete.txt};
 \addplot+[mark = none, dashed, ultra thick] table[x index = 0,y index= 3] {cyclehete.txt};
    \addplot+[mark = none, thick] table[x index = 0,y index= 4] {cyclehete.txt};
   \end{semilogyaxis}
 \end{tikzpicture}
 \caption{Backlog bound for the unidirectional ring with heterogeneous servers.}
  \label{fig:num-ring-hete}
\end{figure}
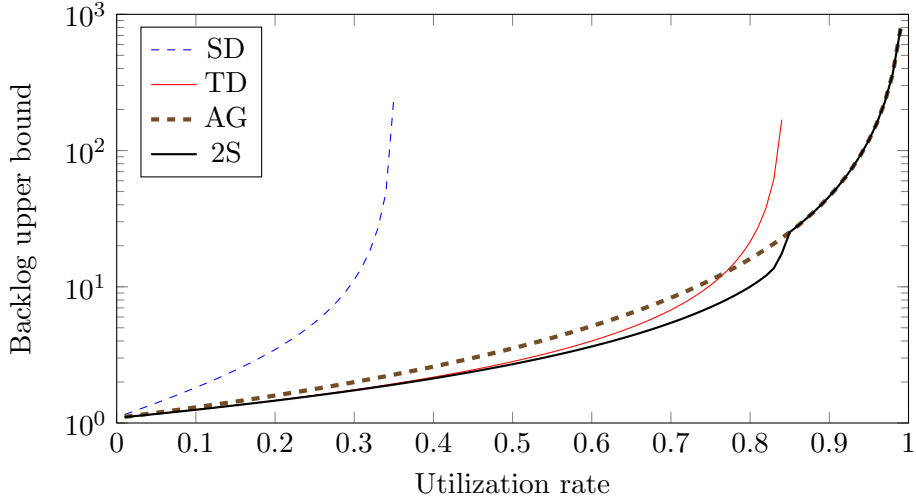

In this case, $\TD$ is better than $\TDA$ for $U<0.68$, and around the
stability limit given by $\TD$, we observe $\TDC$ increases faster as
the backlog computed with $\TD$ grows to infinity. Then $\TDC$ and
$\TDA$ compute of course the same bound.

\subsection{The bidirectional ring}
We now consider the example of the bidirectional ring with $n=10$ as described in Paragraph~\ref{sec:bidir}. 

\begin{figure}[htbp]
	\centering
\begin{tikzpicture}
  \begin{semilogyaxis}[height = 7cm, width = 12cm, xlabel={Utilization rate}, ylabel = {Backlog upper bound},legend entries={$\SD$, $\TD$, group.,2stage}, legend style = {at = {(0.03,0.97)}, anchor = north west},ymin = 1,ymax = 1000,xmin=0, xmax = 0.25]
    \addplot+[mark = none,dashed] table[x index = 0,y index= 1] {bicycle.txt};
    \addplot+[mark = none] table[x index = 0,y index= 2] {bicycle.txt};
   \end{semilogyaxis}
 \end{tikzpicture}
  \caption{Backlog bounds for the bidirectional ring.}
  \label{fig:num-biring-uni}
\end{figure}
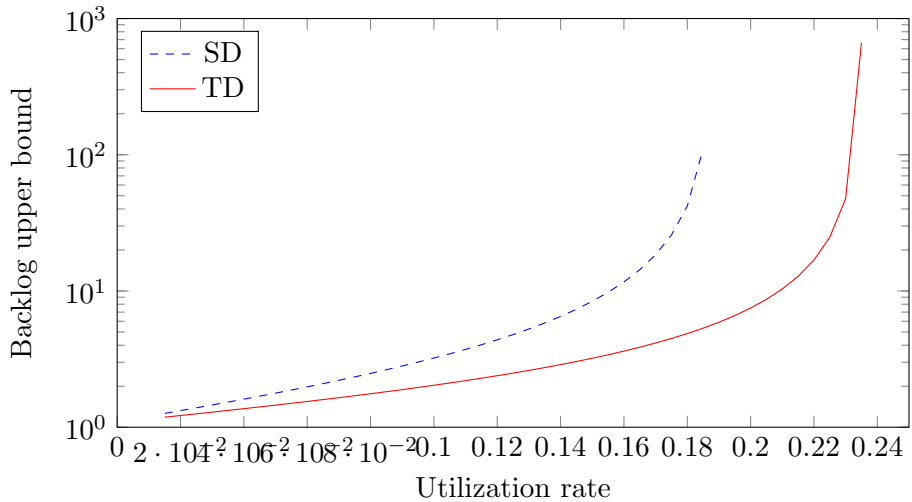

Figure~\ref{fig:num-biring-uni} shows the worst-case backlog bound of
flow 1 at server $n$ computed by the elementary decomposition and tree
transformation.  As expected, the stability condition with $\TD$
($U<0.24$) is improved from $\SD$ ($U<0.19$). The improvement is
approximately $25\%$. Fig.~\ref{fig:stab-ring} (right) shows the
improvement ratio when the number of servers grows. In this case, the
improvement seems logarithmic, and for $n=30$, it is approximately
$33\%$. $TD$ method suffers from having half the flows decomposed in
flows of length 1.

\subsection{A three-ring example}
The bidirectional cycle is not realistic, as in many network are
full-duplexed, but there might be several cycles in
network. Fig.~\ref{fig:tricycle} shows an example of a network
composed of three cycles, and flows circulate along one of the three
cycles. Three servers (those depicted) are common to two cycles. 

\begin{figure}[htbp]
  \centering
  \includegraphics{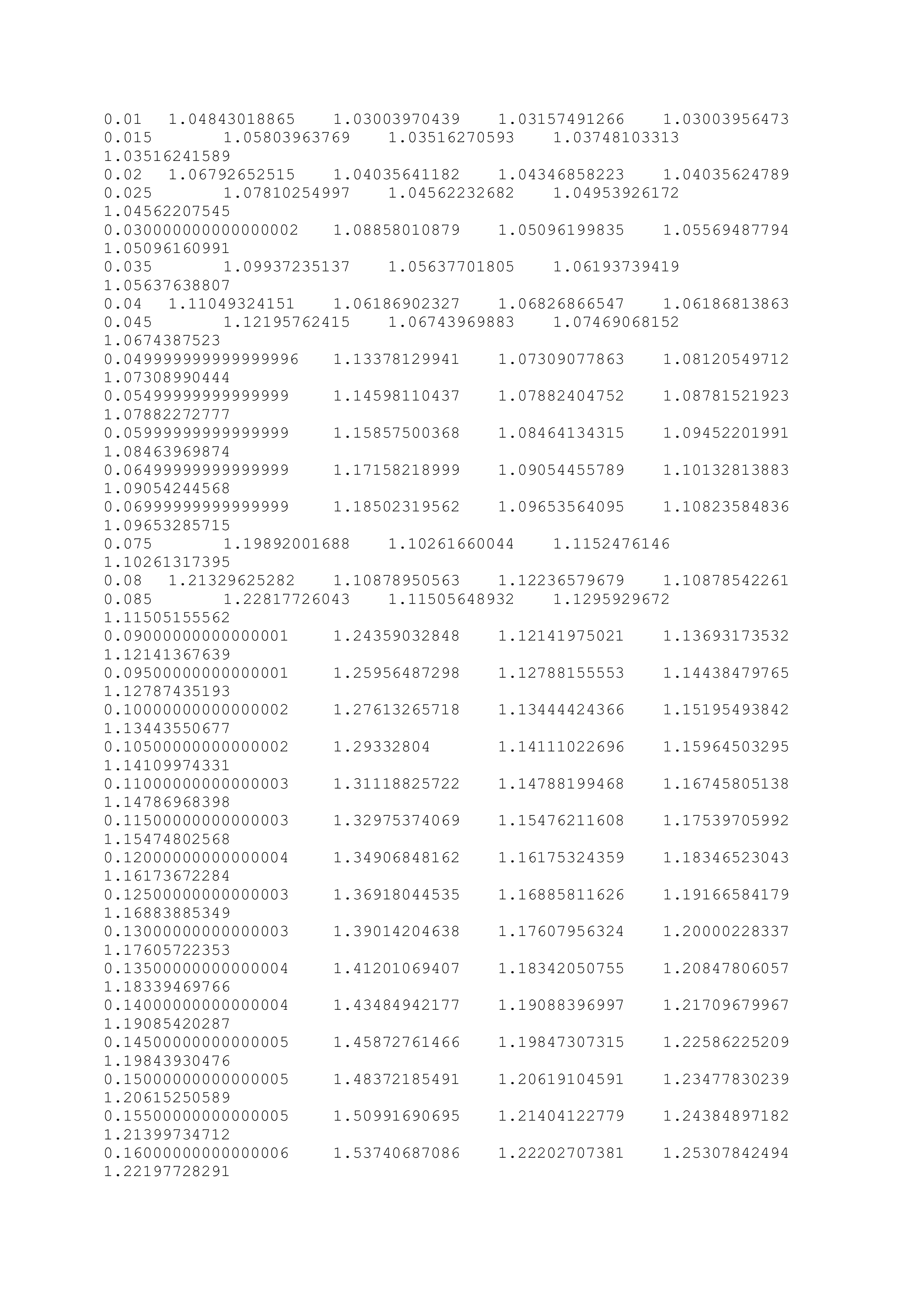}
  \caption{Network composed of three rings.}
  \label{fig:tricycle}
\end{figure}

Fig.~\ref{fig:num-tricycle} shows the backlog of a flow when each
cycle is made of 10 servers, and flows have length 10, except for one
cycle, where we use shorter flows to avoid the problem presented in
Paragraph~\ref{sec:bidir}. We can observe that the stability region
more than doubles from the $\SD$ ($U<0.34$) to the $\TD$ method
($U<0.73$). The improvement using grouping is smaller ($U<0.77$), but
still sensible.

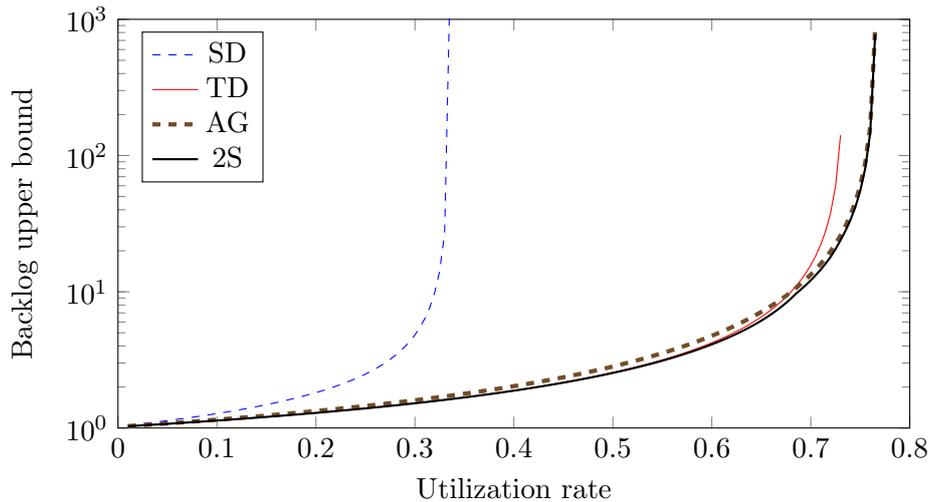
\begin{figure}[htbp]
	\centering
\begin{tikzpicture}
  \begin{semilogyaxis}[ height = 7cm, width = 12cm, xlabel={Utilization rate}, ylabel = {Backlog upper bound},legend entries={$\SD$, $\TD$, $\TDA$,$\TDC$}, legend style = {at = {(0.03,0.97)}, anchor = north west},ymin = 1,ymax = 1000,xmin=0, xmax = 0.8]
    \addplot+[mark = none,dashed] table[x index = 0,y index= 1] {tricycle.txt};
    \addplot+[mark = none] table[x index = 0,y index= 2] {tricycle.txt};
 \addplot+[mark = none, dashed, ultra thick] table[x index = 0,y index= 3] {tricycle.txt};
    \addplot+[mark = none, thick] table[x index = 0,y index= 4] {tricycle.txt};
   \end{semilogyaxis}
 \end{tikzpicture}
 \caption{Backlog bound for the three-ring example.}
  \label{fig:num-tricycle}
\end{figure}

\section{Conclusion}
In this article, the recent results from feed-forwards networks can be
adapted to improve the performance guarantees and stability conditions
of networks with general topology.

Many problems remain open and directions to investigate: finding the
transformation of the network that would lead to better guarantees,
adapt recent results for example~\cite{BS2016} that can be applied for
general arrival and service curves. Future works will also include the
adaptation to service policies, like the FIFO of static priority
policies.

More generally, the stability problem remains open.



\appendix

\section{Proof of Theorem~\ref{th:wcb}}
\label{appA}
To avoid introducing too many notations, we first prove the result for
tandem networks ($\A = \{(i,i+1)~|~i\in\N_{n-1}\}$). We then explain how
to adapt it to sink-trees.  

We prove the theorem by a backward induction on the servers.  Let us
denote by $B_j(x_j,\ldots,x_n,x^*)$ the backlog at time $t_{n+1}$ in
server $n$ when the backlog transmitted at time $t_{j}$ by server
$j-1$ to server $j$ is $x_k$ for the flows served by server $j-1$ and
that end at server $k$ ($k\geq j$) and $x^*$ for the flows of interest
crossing server $j-1$.

We will use the additional notations
\begin{itemize}
\item $b_j^k = \sum_{\{i\notin I,~\pi_i = j\path k\}} b_i$ is the size of the
  burst arriving at server $j$ at time $t_{j}$ belonging to flows
  starting at server $j$ and ending at server $k$;
\item $b^*_j = \sum_{\{i\in I|~\pi_i(1)=j\}}b_i$ is the burst of the
  flows of interests starting at server $j$.
\end{itemize}

If we are able to compute $B_j$ for all $j$, the worst-case backlog
is $B=B_1(0,\dots,0)$. We will show by induction that:
\begin{enumerate}
\item[({\bf A})] $B_j$ is linear in the $x_k$, $T_k$, $b^k_{\ell}$,
  $b^*_{\ell}$, $k\geq \ell\geq j$ and in $x^*$.  More precisely we
  can write
$$B_j((x_{k})_{k\succeq j},x^*) = C_j + x^* + \sum_{k=j}^n \xi_j^k x_k,$$
where $\xi_j^k$ only depends on the $R_k$'s and $r_i^{(*)}$'s, and $C_j$ is a
polynomial of degree 1 in   $T_k$, $b^k_{\ell}$,
  $b^*_{\ell}$, $k\geq \ell\geq j$ (with coefficients
depending on the $R_k$'s and $r_i^{(*)}$'s only).
\item[({\bf B})] $\xi_j^k \leq \xi_j^{k+1}$.
\end{enumerate}

This inequality ({\bf B}) is quite intuitive: the coefficient
$\xi_j^k$ roughly corresponds to quantity of data produced by a flow
starting at $j$ and ending at $k$ the rate grows when the length of
the path grows, as there is more chance to meet a slower server.

\subsection{Initialization - computation of $B_n$}

$B_n$ only depends on the burst that is transmitted at time $t_{n+1}$,
which we note $x_n^n$ for the flows that are not of interest and $x^*$
for the flows of interest. The worst-case backlog for the flows of
interests is obtained when all data from the other flows have been
served and none of the flows of interests:
\begin{align*}
B_n(x_n^n, x^*) =& b^*_n + x^* +  r^*_n (T_n+ \frac{x_n^n + b^n_n + r^n_nT_n}{R_n-r^n_n})\\
=& b^*_n + x^* +  r^*_n T_n + \xi_n^n Q_n^n\\
=& C_n + x^* + \xi_n^n Q_n^n, 
\end{align*}
with $Q_n^n = x_n^n + b^n_n + r^n_nT_n$, $\xi_n^n = \frac{ r^*_n}{R_n-r^n_n}$ and $C_n = b^*_n  + r^*_n T_n$. 
\subsection{Inductive step  - computation of $B_j$ from $B_{j+1}$}
Suppose that $B_{j+1}(x_{j+1},\dots,x_n, x^*) = C_{j+1} + x^* + \sum_{k=j}^n \xi_j^k x_k$. 

$B^k_{j}(x_j^j,\dots,x_j^n, x^*)$ is computed the following way: it
takes time
$\delta$ to serve flows ending at servers $j,\ldots,k$, and data from any
other flow is instantaneously transmitted to server $j+1$. This quantity of data is
$x_{j+1}^{\ell} = b_j^\ell + x_j^\ell + r_j^\ell \delta$ for $\ell >
k$ where $\delta$ satisfies
\[
\sum_{\ell=j}^k (x_j^\ell+b_j^\ell + r_j^\ell \delta) = R_j(\delta -
T_j)_+,
\]
{\em i.e.},
$\delta = T_j+ {\sum_{\ell=j}^k Q_j^\ell }/\big({R_j - \sum_{\ell=j}^k
r_j^\ell}\big)$ with $Q_j^k = b_j^k + x_j^k + r_j^k T_j$.
For $\ell > k$, the amount of data transmitted to server $j+1$ by flows (not of interest) ending at server $k\leq j$ is 
\[
x_{j+1}^{\ell} = Q_j^\ell + r_j^\ell \frac{\sum_{\ell=j}^k Q_j^\ell }{R_j -
\sum_{\ell=j}^k r_j^\ell}.
\]

Then the backlog at server $n$ can be expressed from $B_{j+1}$:
\begin{align*}
B_j^k(x_j^j,\dots,x_j^n, x^*) =& B_{j+1}(0\ldots,0,x_{j+1}^{k+1},\ldots,x_{j+1}^{n},b^*_j + x^* +  r^*_j \delta).\\
=& C_{j+1} + b^*_j + x^* +r^*_j\left(T_j+ \frac{\sum_{\ell=j}^k Q_j^\ell }{R_j - \sum_{\ell=j}^k
r_j^\ell}\right)   + \sum_{\ell>k} \xi_{j+1}^{\ell} \left(Q_j^\ell + r_j^\ell \frac{\sum_{\ell=j}^k Q_j^\ell }{R_j -
\sum_{\ell=j}^k r_j^\ell}\right)\\
=&C_j +x^* +     r^*_j \frac{\sum_{\ell=j}^k Q_j^\ell }{R_j - \sum_{\ell=j}^k
r_j^\ell}  + \sum_{\ell>k} \xi_{j+1}^{\ell} \left(Q_j^\ell + r_j^\ell \frac{\sum_{\ell=j}^k Q_j^\ell }{R_j -
\sum_{\ell=j}^k r_j^\ell}\right)\\
=& C_j +x^* + \sum_{\ell=j}^k \left(\frac{ r^*_j+\sum_{\ell>k}\xi_{j+1}^{\ell}r_j^\ell }{R_j -
\sum_{\ell=j}^k r_j^\ell}\right) Q_j^\ell  + \sum_{\ell>k}\xi_{j+1}^{\ell} Q_{j}^\ell
\end{align*}
with $C_{j} = C_{j+1} +b^*_j + r^*_jT_j$.

Note the other scenarios should have been taken into account, when part of the flows ending at server $k$ is served and another part is transmitted to server $j+1$. It can be easily proved that these ``mixed'' scenarios cannot lead to strictly larger worst-case backlog (see Lemma 4 in~\cite{BN15} for a proof).

\begin{lemma}
  \label{lem:cases}
  There exists $k$ such that $B_j = B^k_{j}$ (that is, for
  all $x_{j},\ldots,x_n, x^*$, we have $B_{j}(x_{j},\ldots,x_n,x^*)
  =B^k_{j}(x_{j},\ldots,x_n,x^*)$) and for all $k$, $\xi_j^k \geq
  \xi_j^{k+1}$.
\end{lemma}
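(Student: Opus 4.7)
My plan is to prove both parts of the lemma simultaneously by backward induction on $j$, with $j=n$ being the base case (where only $\xi_n^n$ is defined, so the monotonicity claim is vacuous) and the inductive step using the value $k^*$ returned by the inner while loop of Algorithm~\ref{algo:wcb}. Recall from the derivation of the inductive step that
\[
B_j^k(x_j,\dots,x_n,x^*) \;=\; C_j + x^* + T(k)\!\!\sum_{\ell=j}^{k} Q_j^\ell + \sum_{\ell>k} \xi_{j+1}^\ell\, Q_j^\ell,
\]
where $T(k) := (r_j^* + \sum_{\ell>k} \xi_{j+1}^\ell r_j^\ell)/(R_j - \sum_{\ell=j}^{k} r_j^\ell)$. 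Since the variables $Q_j^\ell$ and $x^*$ range over non-negative reals and $C_j$ is common to every scenario, the identity $B_j \equiv B_j^{k^*}$ is equivalent to a componentwise dominance assertion: the coefficient vector of $B_j^{k^*}$ must dominate that of every $B_j^k$.

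I would take $k^*$ to be characterized by the algorithm's loop: the largest index with $\xi_{j+1}^{k^*} \leq T(k^*)$, together with $\xi_{j+1}^{k} > T(k)$ for every $k > k^*$. The key algebraic tool is the identity
\[
T(k) - T(k+1) \;=\; \frac{r_j^{k+1}\bigl(\xi_{j+1}^{k+1} - T(k+1)\bigr)}{D(k)}, \qquad D(k) := R_j - \sum_{\ell=j}^{k} r_j^\ell,
\]
so that the sign of $T(k) - T(k+1)$ matches the sign of $\xi_{j+1}^{k+1} - T(k+1)$. For $k > k^*$, the loop iterations give $\xi_{j+1}^{k+1} > T(k+1)$ and hence $T(k) > T(k+1)$; telescoping from $n$ down to $k^*$ gives $T(k^*) \geq T(k)$ for all $k > k^*$. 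For $k < k^*$, I would combine the identity with the inductive monotonicity of $\xi_{j+1}$ to conclude $T(k) \leq T(k^*)$ by a second telescoping, and handle the cross-terms $T(k) \leq \xi_{j+1}^\ell$ for $k \geq \ell > k^*$ by applying the loop-entry inequality at the index $\ell$. These three case analyses establish componentwise dominance and therefore $B_j = B_j^{k^*}$.

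Having identified $B_j = B_j^{k^*}$, I read off $\xi_j^\ell = T(k^*)$ for $\ell \leq k^*$ and $\xi_j^\ell = \xi_{j+1}^\ell$ for $\ell > k^*$. The monotonicity $\xi_j^k \geq \xi_j^{k+1}$ then splits into three subcases. For $k+1 \leq k^*$, both sides equal $T(k^*)$, giving equality. For $k = k^*$, the loop-exit condition $\xi_{j+1}^{k^*} \leq T(k^*)$ together with the inductive monotonicity $\xi_{j+1}^{k^*} \geq \xi_{j+1}^{k^*+1}$ applied at $j+1$ yields
\[
\xi_j^{k^*} \;=\; T(k^*) \;\geq\; \xi_{j+1}^{k^*} \;\geq\; \xi_{j+1}^{k^*+1} \;=\; \xi_j^{k^*+1}.
\]
For $k > k^*$, both $\xi_j^k$ and $\xi_j^{k+1}$ inherit their values from $\xi_{j+1}$, and the inequality reduces to the inductive monotonicity at $j+1$. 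This closes the induction on both parts of the lemma.

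The hard part will be the dominance step, specifically the propagation $T(k) \leq T(k^*)$ for $k < k^*$ and the cross-comparisons $\xi_{j+1}^\ell \geq T(k)$ for $k \geq \ell > k^*$, because the sign identity gives only local information and must be combined with the inductive monotonicity of $\xi_{j+1}$ to yield global bounds. A further subtlety is the boundary of the tree: when the index $k$ reaches a position where $\xi_{j+1}^k$ is undefined (no flow from $j+1$ reaches $k$), the inner loop terminates for that reason rather than via the inequality $\xi_{j+1}^k \leq T(k)$, and one must adapt the reading of the monotonicity argument to this case. This is also where the generalization from the tandem case of~\cite{BN15} to arbitrary tree topologies enters, since the ``for $\ell$ from $j$ to $k$'' in Algorithm~\ref{algo:wcb} now ranges over a subtree rather than a contiguous interval.
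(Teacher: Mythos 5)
Your architecture --- backward induction on $j$, locating $k^*$ through the algorithm's inner loop, the sign identity for $T(k)-T(k+1)$, and reducing $B_j=B_j^{k^*}$ to componentwise dominance of the coefficient vectors --- is the same as the paper's. The genuine problem is the direction of the monotonicity, and it is not a harmless sign slip: the inequality actually needed, and the one the appendix proves as assertion $(\mathbf{B})$, is $\xi_j^k\leq\xi_j^{k+1}$, i.e.\ the multipliers \emph{increase} with the destination index (a cross-flow travelling further alongside the flow of interest delays it at more servers); the ``$\geq$'' in the lemma statement is a typo contradicted by $(\mathbf{B})$ and by a two-server check: with one cross-flow ending at server $1$, one ending at server $2$, and $R_2<R_1-r_1^1$, the loop stops at $k^*=1$ and the mediant property gives $\xi_1^1=T(1)<\xi_2^2=\xi_1^2$. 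Since your entire induction carries the decreasing hypothesis, it already fails at the first non-trivial level.

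This matters because the dominance step is exactly where the increasing direction is consumed. For $k<k^*$ you must establish both $T(k)\leq T(k^*)$ and --- a case missing from your list --- $\xi_{j+1}^{\ell}\leq T(k^*)$ for $k<\ell\leq k^*$, since there the coefficient of $Q_j^{\ell}$ in $B_j^{k}$ is $\xi_{j+1}^{\ell}$ while in $B_j^{k^*}$ it is $T(k^*)$. Both bounds come from $\xi_{j+1}^{\ell}\leq\xi_{j+1}^{k^*}\leq T(k^*)$, i.e.\ from the \emph{increasing} inductive hypothesis combined with the loop-exit inequality; your decreasing hypothesis only yields $\xi_{j+1}^{\ell}\geq\xi_{j+1}^{k^*}$, which gives nothing. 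Moreover $T$ need not be monotone below $k^*$, so ``a second telescoping'' is not the right mechanism; the correct step is the mediant bound $T(k)\leq\max\bigl(T(k+1),\xi_{j+1}^{k+1}\bigr)$ --- this is what the paper's displayed chain of inequalities computes --- which again requires $\xi_{j+1}^{k+1}\leq T(k^*)$. The repair is to flip the inequality to $\leq$ throughout: the boundary comparison then reads $\xi_j^{k^*}=T(k^*)<\xi_{j+1}^{k^*+1}=\xi_j^{k^*+1}$, forced by the loop-entry inequality $\xi_{j+1}^{k^*+1}>T(k^*+1)$ together with the mediant property, and your other two subcases go through unchanged after reversing signs. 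Your closing observation that the loop may terminate at $k=j$ without the exit inequality is a real concern for the chain $T(k^*)\geq\xi_{j+1}^{k^*}$, but with the corrected direction that chain is no longer needed at the boundary.
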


\begin{proof}
  The proof is also by induction. We prove that $\forall j$, we have the equivalence 
\begin{equation*}
\begin{split}
  B_j = B_j^k \Leftrightarrow & ~\forall k'>k~\xi_{j+1}^{k'} >
  \frac{\sum_{i\leq j}
    r^*_i+\sum_{\ell>k'}\xi_{j+1}^{\ell}r_j^\ell }{R_j -
    \sum_{\ell=j}^{k'} r_j^\ell} \\ & \text{ and }\xi_{j+1}^k \leq
  \frac{\sum_{i\leq j}
    r^*_i+\sum_{\ell>k}\xi_{j+1}^{\ell}r_j^\ell }{R_j -
    \sum_{\ell=j}^k r_j^\ell}.
\end{split}
\end{equation*}

Assertion {\bf (B)} is proved at the same time: assuming that {\bf
  (B)} is satisfied for server $j+1$, we will prove it for server $j$.

  The equivalence above also state that if for all $k'>k$
  $B_j\neq B^{k'}_j$, then $B_j^{n} \leq B_j^{n-1}
  \leq \cdots \leq B_j^{k}$. 

  Indeed, we have the equivalence $B_j \neq B_j^n \Leftrightarrow
  \xi_{j+1}^n > \frac{ r^*_j}{R_j - \sum_{\ell=j}^n
    r_j^\ell}\Leftrightarrow \frac{r^*_j}{R_j - \sum_{\ell=j}^n  r_j^\ell}<\frac{ r^*_j + \xi_{j+1}^{n} r_j^n}{R_j - \sum_{\ell=j}^{n-1}
    r_j^\ell} $

But 
$$
B_j^n = C_j +x^* + \sum_{\ell \leq n}  \frac{\sum_{i\leq j} r^*_i}{R_j - \sum_{\ell=j}^n
    r_j^\ell} Q_j^\ell $$  
and 
$$ 
B_j^{n-1} = C_j + x^* + \sum_{\ell \leq n-1}  \frac{\sum_{i\leq j} r^*_i + \xi_{j+1}^{n} r_j^\ell}{R_j - \sum_{\ell=j}^{n-1} r_j^\ell} Q_j^\ell + \xi_{j+1}^n Q_j^n.
$$
 So $B_j^n \leq B_j^{n-1}$. The next steps can be shown similarly. 

 Let us assume that $B_j^{n} \leq B_j^{n-1} \leq \cdots \leq B_j^{k}$.

  The coefficient of $Q^\ell_j$ in $B_j^{k'}$ for all $k'\leq k$
  and $\ell >k$ is $\xi_{j+1}^{\ell}$ and that of $Q^k_j$ is
\begin{itemize}
\item either
  $\frac{ r^*_j+\sum_{\ell>k}\xi_{j+1}^{\ell}r_j^\ell }{R_j -
    \sum_{\ell=j}^{k} r_j^\ell} $ (for
  $B_j^k$)
\item or $\xi_{j+1}^k$ (for $B_j^{\ell}$, $\ell<k$).
\end{itemize}

Suppose that 
\begin{equation}
\label{eq:delta}
\frac{    r^*_j+\sum_{\ell>k}\xi_{j+1}^{\ell}r_j^\ell }{R_j -
    \sum_{\ell=j}^{k} r_j^\ell} \geq \xi_{j+1}^k. 
\end{equation}

For $\ell \leq k' <k$, the coefficient of $Q^{\ell}_j$ in $B_j^{k'}$ is
 
\begin{align*}
    \frac{r^*_j+\sum_{\ell>k'}\xi_{j+1}^{\ell}r_j^\ell }{R_j -
    \sum_{\ell=j}^{k'} r_j^\ell } & \leq   \frac{ 
    r^*_j+ \sum_{\ell' =k+1}^n r_j^{\ell'}\xi_{j+1}^{\ell'} + \sum_{\ell' = k'+1}^k
    r_j^{\ell'}\xi_{j+1}^{k}} {R_j - \sum_{\ell'=j}^{k'} r_j^{\ell'}} \\
   &\leq  \frac{
    r^*_j+ \sum_{\ell' =k+1}^n r_j^{\ell'}\xi_{j+1}^{\ell'} + \sum_{\ell' = k'+1}^k
    r_j^{\ell'}\frac{
    r^*_j+\sum_{\ell>k}\xi_{j+1}^{\ell}r_j^\ell }{R_j -
    \sum_{\ell=j}^{k} r_j^\ell}} {R_j - \sum_{\ell'=j}^{k'} r_j^{\ell'}} \\
&= \frac{ 
    r^*_j+ \sum_{\ell' =k+1}^n r_j^{\ell'}\xi_{j+1}^{\ell'}}{R_j - \sum_{\ell'=j}^{k'} r_j^{\ell'}} 
  \left(1+ \frac{\sum_{\ell' = k'+1}^k
    r_j^{\ell'}}{R_j - \sum_{\ell=j}^{k} r_j^\ell}\right)\\
&=\frac{ 
    r^*_j+ \sum_{\ell' =k+1}^n r_j^{\ell'}\xi_{j+1}^{\ell'}}{R_j - \sum_{\ell=j}^{k} r_j^\ell}
\end{align*}

which is the coefficient of $Q^{\ell}_j$ in $B_j^{k}$. The first
inequality uses $({\bf B})$ for server $j+1$ and the second uses
Inequality~\eqref{eq:delta}.

Moreover, for $k' < \ell < k$, the coefficient of $Q^{\ell}_j$ in $B_j^{k'}$ is  $$\xi_{j+1}^{k'} \leq \xi_{j+1}^k \leq \frac{
    r^*_j+\sum_{\ell>k}\xi_{j+1}^{\ell}r_j^\ell }{R_j -
    \sum_{\ell=j}^{k} r_j^\ell},$$ 
so $B_j^k \geq
B_j^{k'}$ and $B_j = B_j^k $.

The same kind of computations lead to $B_j^{k-1} > B_j^k $
if $\xi_{j+1}^k > \frac{
    r^*_j+\sum_{\ell>k}\xi_{j+1}^{\ell}r_j^\ell }{R_j -
    \sum_{\ell=j}^{k} r_j^\ell}$.

Set $\xi_j^k$ according to the $B_j$ that is computed (that is $B_j^k$ for some $k$). Then we still have $\xi_j^\ell \geq \xi_j^{\ell+1}$ and ${\bf (B)}$ is true for server $j$. 
\end{proof}

Finally, we have $$B = \sum_{j\leq n} ( r^*_j T_j+ \sum_{k\geq j} \xi_j^k r^k_j T_j)  + \sum_{i\leq n} b^*_i + \sum_{j\leq k\leq n} \xi_j^k b_j^k.$$

\subsection{Adaptation to trees}
Consider now a sink-tree, the above analysis is still valid, and each
branch of the tree can be analysed independently: if a server has
several predecessors, then the optimization will be for each of the on
disjoint sets of servers and flows. 

The algorithm still runs in polynomial time.

\end{document}